\numberwithin{equation}{section} \pagestyle{plain}
\newtheorem{theorem}{Theorem}[section]
\newtheorem{corollary}{Corollary}[section]
\newtheorem{lemma}{Lemma}[section]
\newtheorem{proposition}{Proposition}[section]
\newtheorem{definition}{Definition}[section]
\newtheorem{remark}{Remark}
\newtheorem{example}{Example}
\newcommand{\squared}[1]{\tikz[baseline=(char.base)]{\node[draw,rectangle,inner sep=2pt] (char) {#1};}}
\def\XX{\boldsymbol{X}}
\def\xx{\boldsymbol{x}}
\def\EE{\boldsymbol{E}}
\def\rr{\boldsymbol{r}}
\def\ee{\boldsymbol{e}}
\def\FF{\boldsymbol{F}}
\def \UU{\boldsymbol{U}}
\def\RRR{\boldsymbol{R}}
\def\WW{\boldsymbol{W}}
\def\uu{\boldsymbol{u}}
\def\ii{\boldsymbol{i}}
\def\II{\boldsymbol{I}}
\def\FFF{\mathcal{F}}
\def\GGG{\mathcal{G}}
\def\EEE{\mathcal{E}_d}
\def\yy{\boldsymbol{y}}
\def\pp{\boldsymbol{p}}
\def\RR{\mathbb{R}}
\def\BBB{\mathcal{B}}
\def\ZZ{\boldsymbol{Z}}
\def\ff{\boldsymbol{f}}
\def\design{\mathcal{X}_d}
\def\VaR{\text{VaR}}
\def\ES{\text{ES}}
\def\FGM{\text{FGM}}
\def\GFGM{\text{GFGM}}
\begin{document}

	\title{Generalized FGM dependence: Geometrical representation  and convex bounds on sums}
	
	\author{Hélène Cossette \\
			\textit{\'Ecole d'actuariat, Université Laval, Québec, Canada},\\
			\texttt{helene.cossette@act.ulaval.ca} \and
			Etienne Marceau \\
			\textit{\'Ecole d'actuariat, Université Laval, Québec, Canada},\\
			\texttt{etienne.marceau@act.ulaval.ca} \and
			Alessandro Mutti  \\ 
			\textit{Department of Mathematical Sciences G. Lagrange, Politecnico di Torino, Italy}, \\
			\texttt{alessandro.mutti@polito.it} \and
	        Patrizia Semeraro \\ 
			\textit{Department of Mathematical Sciences G. Lagrange, Politecnico di Torino, Italy},\\
			\texttt{patrizia.semeraro@polito.it}
			}
	\maketitle

	\begin{abstract}
		
		Building on the one-to-one relationship between generalized FGM copulas and multivariate Bernoulli distributions, we prove that the class of multivariate distributions with generalized FGM copulas is a convex polytope.
		Therefore, we find sharp bounds in this class for many aggregate risk measures, such as value-at-risk, expected shortfall, and entropic risk measure, by enumerating their values on the extremal points of the convex polytope.
		This is infeasible in high dimensions.
		We overcome this limitation by considering the aggregation of identically distributed risks with generalized FGM copula specified by a common parameter $p$.
		In this case, the analogy with the geometrical structure of the class of Bernoulli distribution allows us to provide sharp analytical bounds for convex risk measures.
		
		\noindent \textbf{Keywords}: Multivariate Bernoulli distributions, GFGM copulas, Huang-Kotz FGM copulas, risk measures, convex order.
	\end{abstract}

	\section{Introduction}

	Finding bounds for aggregated risks with partial information on their joint distribution is a widely addressed problem in insurance and finance. 
	The available information on the multivariate dependence is often modeled using a copula. Our main contribution is to solve the problem of finding analytical bounds for aggregated risks under the assumption that their dependence is modeled using a generalized Farlie-Gumbel-Morgenstern (GFGM) copula. 
	
	
	FGM copulas allow both positive and negative dependence, but do not cover the complete range of dependence. Due to its simple analytical formulation, the FGM family of copulas is used to build multivariate models, for example, in insurance and finance \cite{hashorva1999extreme}, \cite{mao2015risk}, \cite{yang2013extremes}, transport \cite{zou2014constructing}, natural hazard - drought \cite{saghafian2014drought}, marketing \cite{kim2022copula}.
	
	There exists a wide variety of extensions of bivariate FGM copulas, such as the well-known Huang-Kotz FGM copulas, but much less in higher dimension. 
	Recently, researchers have integrated the Huang-Kotz FGM copulas and other extensions of that family as building blocks within models for directional dependence of genes in bioinformatics (\cite{kim2008copula}), bivariate failure time models with competing risks (\cite{shih2018likelihood}, \cite{shih2019bivariate}, \cite{shih2019package}), 
	copula-based stress-strength models in reliability (\cite{domma2013copula}, \cite{hudaverdi2023copula}), and models that account for asymmetric dependence
	between variables in biomedicine (\cite{li2023property}). 
	Through a stochastic representation involving multivariate Bernoulli vectors, the authors of \cite{blier2024new} introduce a family of multivariate GFGM copulas. Such a stochastic representation is key to grasp the dependencies between the components of a random vector and to facilitate greatly the sampling procedure. It also implies that 
	the class of GFGM copulas inherits the geometrical properties of the class $\BBB_d(\pp)$ of probability mass functions (pmfs) of $d$ dimensional Bernoulli vectors with mean vector $\pp=(p_1,\ldots, p_d)$. Moreover, the vector $\pp$ becomes a vector parameter for $\GFGM$ copulas. Although some results are presented for the general case, we focus on the case $p=p_1=\ldots=p_d$, for two reasons: some results cannot be generalized in a straightforward manner to have different values within $\pp$; and we aim at keeping the model parsimonious in terms of number of parameters.
	This last motivation allows us to have a better insight in the role of the parameter $p$ driving the dependence. Actually, we do more. In fact we also prove that the class $\mathcal{G}_d^p(F)$ of joint distributions with a common univariate margin $F$ and GFGM$(p)$ copula shares the same geometrical structure as the class of multivariate Bernoulli distributions.
	This analogy allows us to easily work with sums $S = X_1+\dots+X_d$ of random variables with joint distribution in $\mathcal{G}_d^p(F)$, applying the results of \cite{fontana2021model}.
	We also show that the convex order is preserved from the elements of the class of sums of components of random vectors following Bernoulli distributions to our class of interest $\mathcal{S}_d^{p}(F)$, that is the class of distributions of sums of random vectors with cumulative distribution function (cdf) in $\mathcal{G}_d^p(F)$.
	This contribution is our main result and allow us to find the vectors in $\mathcal{G}_d^p(F)$ whose sums are minimal in convex order and this is important because they are the vectors where the lower bounds for convex risk measures are reached. 
	We considered two convex risk measures, the expected shortfall, and the entropic risk measure, and also the value-at-risk. We analytically find their bounds in the cases of exponential margins and discrete margins and provide numerical illustrations in these two special cases in high dimensions. Last but not least, building on the geometrical structure of the joint distributions behind the sums, we address another important open issue and exhibit some possible alternative dependence structures corresponding to minimal aggregated risks.
	
	We also present some simple but new results for the GFGM dependence without the assumption of a scalar parameter $p$ and without assuming identically distributed risks.
	We find a stochastic representation that generalizes the correspondence proved in \cite{blier2024new} to any random vector $\XX=(X_1,\ldots, X_d)$ with dependence structure defined by a GFGM copula. 
	Building on this representation, we also prove that the class $\mathcal{G}^{\pp}_d(F_1,\ldots,F_d)$ of joint distributions of random vectors $\XX$  with univariate marginals $F_1,\dots, F_d$ and GFGM copula with parameters $\pp=(p_1,\dots, p_d)$ is a convex polytope, that is a convex hull of a finite set of points, called extremal points. 
	However, the number of extremal points is huge and there are computational limitations in finding them in high dimensions.
	Overcoming these theoretical limitations is left for future work.
	Although the theoretical investigation of this case is beyond the scope of this paper, we also discuss some numerical examples of the GFGM dependence, without the assumption of identically distributed risk. In this general case, we can proceed by enumeration of the extremal points up to dimension $d=5$.
	The paper is structured as follows. Section \ref{prelim} introduces the preliminary notions about multivariate Bernoulli distributions and GFGM copulas with a common parameter and  their link.
	A new stochastic representation for GFMG copulas is provided in Section \ref{newrep}.
	We study the geometrical structure and the convex order in the class of uniform vectors with GFGM copulas with a common parameter $p$ in Section \ref{FGMcop}. In the subsequent section, we provide sharp bounds for the convex risk measures and the value-at-risk and we provide numerical illustrations. The last Section \ref{NumILL} presents an example of the generalization to different values of $p$ for future research purposes and concludes.

	\section{Preliminaries and state of the art}\label{prelim}
	In this section, we recall some notions on the set $\BBB_d$  of $d$-dimensional pmfs which have Bernoulli univariate marginal distributions and on the class $\mathcal{C}_d$ of GFGM copulas.

	\subsection{Multivariate Bernoulli distributions and convex polytopes}
	
	Let us consider the Fr\'echet class  $\BBB_d(\pp)= \BBB_d(p_1,\ldots, p_d)$  of multivariate Bernoulli distributions with  Bernoulli marginal distributions with means $p_j, j \in \{1,\ldots, d\}$.
	We assume throughout the paper that $p_j$ are rational, that is $p_j\in \mathbb{Q}$, $j \in \{1,\ldots, d\}$. Since $\mathbb{Q}$ is dense in $\RR$, this is not a limitation in applications. 
	We denote by $\BBB_d(p)$ the class of multivariate Bernoulli distributions with identical Bernoulli marginal distributions with mean $p$, meaning $p_1 =\ldots=p_d=p$.
	We assume that vectors are column vectors and we denote by $A^{\top}$ the transpose of a matrix $A$.

	If  $\II=(I_1, \dots, I_d)$ is  a random vector with joint pmf $f$ in $\BBB_d$, we denote
	the column vector which contains the values of $f$ over $\design=\{0, 1\}^d$  by $\ff =(f_{\xx}:\xx\in\design):=(f(\xx):\xx\in\design)$; we make the non-restrictive assumption that the set $\design$ of $2^d$ binary vectors is ordered according to the reverse-lexicographical criterion.  
	As an example, we consider  $d=3$ and we have $\mathcal{X}_3=\{000, 100, 010, 110, 001, 101, 011, 111\}$.
	The notations $\II \in \BBB_d(\pp)$  and  $\ff \in \BBB_d(\pp)$ indicate that $\II$  has joint pmf $f\in \BBB_d(\pp)$.
	In \cite{fontana2018representation}, the authors prove that $\BBB_d(\pp)$ is a convex polytope (see as a standard reference \cite{de1997computational}); it means that $\BBB_d(\pp)$ admits the following representation:
	\begin{equation*}  \label{cone}
		\BBB_d(\pp)=\{\ff\in \RR_+^{2^d}: H \ff=0, \sum_{\xx\in \design} f_{\xx}=1\},
	\end{equation*}
	where $H$ is a $d \times 2^d$ matrix whose rows, up to a non-influential multiplicative constant, are $((\boldsymbol{1}-\xx_j)^{\top} - \tfrac{(1-p_j)}{p_j}\xx_j^{\top})$, $j\in\{1,\ldots,d\}$, and where $\xx_j$ is the vector which contains only the $j$-th element of $\xx \in \design$, $j\in\{1,\ldots,d\}$, e.g. for the bivariate case $\xx_1^T=(0, 1,0,1)$ and $\xx_2^T=(0, 0,1,1)$.
	Therefore, there are joint pmfs $\rr_k \in \BBB_d(\pp)$, $k \in \{1,\ldots, n_{\pp}^{\BBB}\}$, and for any $\ff\in \BBB_d(\pp)$, there exist $\lambda_1,\ldots, \lambda_{n_{\pp}^{\BBB}}\geq0$ summing up to one such that
	\begin{equation}
		\ff=\sum_{k=1}^{n_{\pp}^{\BBB}}\lambda_k \rr_k.
		\label{eq:ConvexRepresentationRays}
	\end{equation}
	We call the vectors $\rr_k$, $k \in \{1,\ldots, n_{\pp}^{\BBB}\}$, the extremal points of $\BBB_d(\pp)$, and $r_k$ the corresponding joint pmfs of the random vector $\RRR_k$. Here, $n_{\pp}^{\BBB}$ is the number of extremal points of $\BBB_d(\pp)$ which depends on $\pp$ and obviously on $d$.
	For low dimension $d$, the extremal points $\rr_k$ can be found using the software \texttt{4ti2} (see \cite{4ti2}). 
	However, their number $n_{\pp}^{\BBB}$ increases rapidly with the dimension $d$, as discussed in Section 2 of \cite{fontana2024high}.  
	
	We need to introduce the following classes of distributions as they are building blocks of one of our main results. Let $\mathcal{E}_d(p) \subseteq \BBB_d(p)$ be the class of exchangeable $d$-dimensional Bernoulli distributions with mean $p \in [0,1]$. Note that $f\in \mathcal{E}_d(p)$ if $f\in\BBB_d(p)$ and $f(\xx)=f(\xx_{\sigma})$, where $\xx_{\sigma}$ is any permutation of $\xx$, for every $\xx\in \design$. We also denote by $\mathcal{D}_d(dp)$ the class of univariate discrete distributions with support on $\{0,1,\ldots,d\}$ and mean $dp$. 
	If $D$ is a discrete random variable with pmf $f_D$ in $\mathcal{D}_d(dp)$, we denote the column vector containing the values of its pmf $f_D$ over $\{0,\ldots, d\}$ by $\ff^D = (f_1^D,\ldots, f_{d+1}^D)^{\top}$. 
	In other words, $f_D(k) = f_{k+1}^D$, $k \in \{0,1,\dots,d\}$.
	The notations $D\in \mathcal{D}_d(dp)$ and $\ff^D \in\mathcal{D}_d(dp)$ indicate that the discrete random variable $D$ has pmf $f_{D}\in \mathcal{D}_d(dp)$.
	The authors of \cite{fontana2021model} prove that $\mathcal{D}_d(dp)$ is a convex polytope: 
	it means that $\ff^{D}\in \mathcal{D}_d(dp)$ if and only if there
	exist $\lambda_1, \ldots, \lambda_{n_p^{\mathcal{D}}}\geq 0$ summing up to 1 such that
	\begin{equation*}\label{Sgenerators}
		{\ff}^{D}=\sum_{k=1}^{n_p^{\mathcal{D}}} \lambda _k \rr^{D}_k,
	\end{equation*}
	where $\rr^{D}_k, \, k \in \{1,\ldots, n_p^{\mathcal{D}} \}$, are the extremal points of $\mathcal{D}_d(dp)$ and $n_p^{\mathcal{D}}$ is their number which depends on $p$ and obviously on $d$ (see Corollary 4.6 in \cite{fontana2021model} for the computation of $n_p^{\mathcal{D}}$). 
	We denote by ${R}^{D}_k$ a random variable with pmf  $r_{D,k}$, $k \in \{1,\dots,n_p^{\mathcal{D}}\}$. 
	The extremal points have at most two non-zero components.
	Let $k_1$ and  $k_2$ with $k_1=0,1,\ldots, k_1^{\vee}$, $k_2=k_2^{\wedge}, k_2^{\wedge}+1, \ldots, d$, where  $k_1^{\vee}$ is
	the largest integer lower than $dp$ and $k_2^{\wedge}$ is the smallest integer
	greater than $dp$. The extremal pmfs $r_{D, k}$ have support on $\{k_1, k_2\}$ and  have the following analytical expression:

	\begin{equation*}  \label{binul}
		r_{D, {k}}(y)=
		\begin{cases}
			\frac{k_2-dp}{k_2-k_1}, & y=k_1 \\
			\frac{dp-k_1}{k_2-k_1}, & y=k_2 \\
			0, & \text{otherwise}%
		\end{cases}.
	\end{equation*}
	
	If $dp$ is an integer, we also have an extremal point with support on the point $pd$, that is
	\begin{equation*}  \label{onenul}
		r_{D, dp}(y)=
		\begin{cases}
			1, & y=dp \\
			0, & \text{otherwise}%
		\end{cases}
		.
	\end{equation*}
	
	In \cite{fontana2021model}, the authors show that the following relationship between classes of distributions holds:
	\begin{equation}\label{iff}
		\mathcal{E}_d(p) \leftrightarrow \mathcal{D}_d(dp),
	\end{equation}
	that is, the class $\mathcal{D}_d(dp)$ has a one-to-one relationship with the class of exchangeable Bernoulli distributions $\mathcal{E}_d(p)$. 
	In \cite{fontana2021model}, the authors also show that given $D\in \mathcal{D}_d(dp)$ there is one and only one exchangeable element $\II^e\in \BBB_d(p)$ such that $\sum_{j=1}^d I^e_j\overset{\mathcal{L}}{=} D$, where the notation $\overset{\mathcal{L}}{=}$ indicates equality in distribution. 
	It follows that $\mathcal{E}_d(p)$ is a convex polytope and that the extremal points of $\mathcal{E}_d(p)$ are the exchangeable pmfs corresponding to the extremal points of  $\mathcal{D}_d(dp)$. We denote by $\ee_k$ extremal points or extremal pmfs of $\EEE(p)$, $k \in \{1,\ldots,n_p^{\mathcal{D}}\}$. 
	We denote by $\boldsymbol{E}_k$ a random vector with pmf  $e_k$. 
	In Table \ref{extremPointgen}, we provide each convex polytope with its generators.
	
	\begin{table}[htb]
		\centering
		\begin{tabular}{c|c|c|c}
			\toprule
			Polytope & pmf-generator & rv-generator & number of generators\\
			\midrule
			$\BBB_d(p)$ & $\rr_k\in \RR^{2^d}$ & $\RRR_k$ & $n_p^{\BBB}$ \\
			$\mathcal{D}_d(dp)$& $\rr_{D,k}\in\RR^{d+1}$  & $R_{D,k}$ & $n_p^{\mathcal{D}}$\\
			$\EEE(p)$ & $\ee_k \in\RR^{2^d}$ & $\EE_k$ & $n_p^{\mathcal{D}}$ \\
			\bottomrule
		\end{tabular}
		\caption{\emph{For each polytope of pmfs the second column provides the name of extremal points and their dimension, the third column provides the name of its corresponding random variable, and the last column the number of generators. }}
		\label{extremPointgen}
	\end{table}

	\subsection{Bernoulli distributions and GFGM copulas} \label{GFGMprel}

	The authors of \cite{blier2024new} provide a stochastic representation for GFGM copulas building on a multivariate Bernoulli vector $\II\in \mathcal{B}_d(p)$ establishing a link with the class $\BBB_d(p)$ on which we establish our results.
	Let $\II \in \BBB_d(\pp)$ be a $d$-variate Bernoulli random vector and let $\UU_0 = (U_{0, 1}, \dots, U_{0, d})$ and $\UU_1 = (U_{1, 1}, \dots, U_{1, d})$ be vectors of $d$ independent uniform random variables.
	Assume the random vectors $\II$, $\UU_0$, and $\UU_1$ to be independent and define the random vector $\UU$ with the following representation:
	\begin{equation}\label{eq:representation-u}
		\UU \overset{\mathcal{L}}{=} \UU_0 ^{{1} - {\pp}} \UU_1^{\II} = (U_{0, 1}^{1 - p_1}U_{1, 1}^{I_1}, \dots, U_{0, d}^{1 - p_d}U_{1, d}^{I_d}).
	\end{equation}
	The joint cdf of the random vector $\UU$ in \eqref{eq:representation-u} is the GFGM copula $C$ with vector of parameters $\pp$ as defined below.
	\begin{definition}\label{eq:copula-GFGM}
		A $d$-variate GFGM copula $C$ with vector of parameters $\pp=(p_1,\ldots,p_d) \in (0,1)^d$ has the following expression:
		\begin{equation} \label{eq:copula-natural}
			C(\uu) = 
			\prod_{m = 1}^d u_m 
			\left( 1 + \sum_{k = 2}^d \sum_{1 \leq j_1 < \dots < j_k \leq d} \nu_{j_1\dots j_k} \left(1 - u_{j_1}^{\frac{p_{j_1}}{1-p_{j_1}}}\right) \cdots \left(1 - u_{j_k}^{\frac{p_{j_k}}{1-p_{j_k}}}\right) \right),
		\end{equation}
		for $\uu \in [0, 1]^d,$ where, for $1\leq j_1 < \dots < j_k \leq d, k \in \{2, \dots, d\}$,
		\begin{equation*}
			\nu_{j_1\dots j_k} = \mathrm{E}\left[\prod_{n = 1}^{k} \frac{I_{j_n} - p_{j_n}}{p_{j_n}}\right],
		\end{equation*}
		where $\II = (I_1,\dots,I_d) \in \BBB_d(\pp)$.
	\end{definition}
	
	If one lets $b_j = \frac{p_j}{1-p_j}$, for $j \in \{1,\dots,d\}$, the expression of the copula $C$ in \eqref{eq:copula-natural} becomes
	\begin{equation} \label{eq:copula-natural2}
		C(\uu) = 
		\prod_{m = 1}^d u_m 
		\left( 1 + \sum_{k = 2}^d \sum_{1 \leq j_1 < \dots < j_k \leq d} \nu_{j_1\dots j_k} \left(1 - u_{j_1}^{b_{j_1}}\right) \cdots \left(1 - u_{j_k}^{b_{j_k}}\right) \right),
	\end{equation}
	for $\uu \in [0, 1]^d$.
	The shape parameters $p_j$ (or $b_j$), $j \in \{1,2,\dots,m\}$, govern the dependence relation between the components of $\boldsymbol{U}$ in regard to the symmetry or asymmetry.
	When $p_j = p$, implying that $b_j = b = \frac{p}{1-p}$, $j \in \{1,2,\dots,m\}$, the copula $C$ in \eqref{eq:copula-natural2} corresponds to the multivariate version of the symmetric bivariate Huang-Kotz FGM copula introduced and studied in Section 2 of \cite{huang1999modifications}. In \cite{bairamov2001new}, the authors propose an asymmetric bivariate Huang-Kotz FGM copula while \cite{bekrizadeh2012new} suggest a multivariate version of the symmetric Huang-Kotz FGM copula; without however providing lower and upper bounds on the set of dependence parameters. The stochastic representation in \eqref{eq:representation-u} allows the authors of \cite{blier2024new} to provide a multivariate extension of the Huang-Kotz FGM family of copulas with constraints on the dependence parameters; it is a key result to investigate dependence properties within this copula family and it easily provides a sampling algorithm. In the present paper, we build on the stochastic representation of the \eqref{eq:representation-u} to provide results on bounds of dependent aggregated risks. For a review on various extensions of the family of FGM copulas including the Huang-Kotz FGM copulas, see \cite{saminger2021impact}  and \cite{blier2024new}.


	We denote by $\mathcal{C}_d^{\pp}$ and $\mathcal{C}_d^p$ the classes of $d$-variate GFGM copulas with parameters $\pp = (p_1,\ldots,p_d)$ and with a common parameter $p$, respectively.
	The notation $\UU \in \mathcal{C}_d^{\pp}$ indicates that the joint cdf $C$ of the random vector $\UU$ with uniform margins is a copula $C \in \mathcal{C}_d^{\pp}$.
	In the special case  $p=\tfrac{1}{2}$, $\mathcal{C}_d^{1/2} \equiv \mathcal{C}_d^{\FGM}$, where $\mathcal{C}_d^{\FGM}$ is the class of $d$-variate Farlie-Gumbel-Morgenstern (FGM) copulas, see e.g. Section 6.3 of \cite{durante2015principles} and \cite{blier2022stochastic}.

	In \cite{blier2024new}, the authors mention in Remark 1 that the class $\mathcal{C}^{\pp}_d$ shares the geometrical structure of $\mathcal{B}_d(\pp)$. In particular, any $F_{\UU}\in \mathcal{C}_d^{\pp}$ is a convex combination of  $F_{\UU^{(\RRR_k)}}$, where $\UU^{(\RRR_k)}$ is built from $\RRR_k$ according to \eqref{eq:representation-u}, $k \in \{1,\ldots,n_{\pp}^{\BBB}\}$.  
	Indeed, by using the stochastic representation in \eqref{eq:representation-u}, there is a multivariate Bernoulli vector $\II\in \BBB_d(\pp)$ such that $\UU = \UU^{(\II)} = \UU_0 ^{{1} - {\pp}} \UU_1^{\II}$, and we have
	\begin{equation*}\label{eq:politCop}
		\begin{split}
			F_{\UU^{(\II)}}(\xx) &= \Pr \bigg( U_1^{(\II)} \leq x_1,\ldots, U_d^{(\II)} \leq x_d \bigg) \\
			& \overset{*}{=}\sum_{\ii \in \{0,1\}^d} \Pr \bigg( U_{0,1}^{1-p_1}U_{1,1}^{i_1} \leq x_1,\ldots, U_{0,d}^{1-p_d}U_{1,d}^{i_d}\leq x_d \bigg)f_{\II} ( \ii)  \\
			& = \sum_{k=1}^{n_p^{\mathcal{B}}}\lambda_k \sum_{\ii \in \{0,1\}^d} \Pr  \bigg( U_{0,1}^{1-p_1} U_{1,1}^{i_1} \leq x_1,\ldots, U_{0,d}^{1-p_d}U_{1,d}^{i_d}\leq x_d \bigg) r_k (\ii) \\
			& \overset{**}{=} \sum_{k=1}^{n_p^{\mathcal{B}}}\lambda_k F_{\UU^{(\RRR_k)}}(\xx), \qquad \xx\in [0,1]^d,
		\end{split}
	\end{equation*}
	where equalities  $\overset{*}{=}$ and  $\overset{**}{=}$ respectively follow from the independence of $\UU_0$, $\UU_1$ and $\II$ and of $\UU_0$, $\UU_1$ and $\RRR_k$, $k \in \{1,\ldots, n_{\pp}^{\BBB}\}$. 
	It follows that any GFGM copula  $C$ admits the representation
	\begin{equation}\label{eq:CopConvex}
		C(\uu)
		=
		\sum_{k=1}^{n_{\pp}^{\BBB}}
		\lambda_k C_{\RRR_k}(\uu),
		\quad
		\uu \in [0,1]^d,
	\end{equation}
	where $C_{\RRR_k}$ is the copula associated to $\UU^{(\RRR_k)}$, $k \in \{1,\ldots, n_{\pp}^{\BBB}\}$.
	The class $\mathcal{C}_d^{\pp}$ of copulas is a convex polytope.
	
	We conclude this section with a simple but general new result for a class of multivariate distributions with dependence structure built with a family of copulas being a convex polytope. Our aim is to later investigate the specific class $\GGG_d^{\pp}(F_1,\dots,F_d)$ of distributions with marginal distributions $F_1, \ldots, F_d$ and with a copula in the class $\mathcal{C}_d^{\pp}$.
	
	\begin{proposition}
		Let $\mathcal{C}$ be a class of copulas. Let $\FFF_d(F_1,\dots,F_d)$ be a class of multivariate distributions with marginal distributions $F{_1}, \ldots, F{_d}$ and with a copula in the class $\mathcal{C}$.
		If $\mathcal{C}$ is a convex polytope with extremal points $\widetilde{C}_{1} ,...,\widetilde{C}_{n}$, then $\FFF_d(F_1,\dots,F_d)$ is a convex polytope with extremal points $\widetilde{F}_{1},...,\widetilde{F}_{n}$, where
		\begin{equation*}
			\widetilde{F}_{i}(\xx)=\widetilde{C}_{i}(F_1(x_1),...,F_d(x_d))
		\end{equation*}
		for $\xx \in \RR^d$ and $i \in \{1,\ldots,n\}$. 
	\end{proposition}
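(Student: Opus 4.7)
The plan is to transport the polytope structure from $\mathcal{C}$ to $\mathcal{F}_d(F_1,\dots,F_d)$ through the bijective map $C \mapsto F$ given by Sklar's representation $F(\xx) = C(F_1(x_1),\dots,F_d(x_d))$, and then to verify that this map sends convex combinations to convex combinations in both directions, as well as extremal points to extremal points.

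First I would show that every $F \in \mathcal{F}_d(F_1,\dots,F_d)$ admits a representation as a convex combination of the $\widetilde{F}_i$. By definition $F(\xx) = C(F_1(x_1),\dots,F_d(x_d))$ for some $C \in \mathcal{C}$, and by the polytope hypothesis there exist $\lambda_1,\dots,\lambda_n \geq 0$ summing to $1$ with $C = \sum_{i=1}^n \lambda_i \widetilde{C}_i$. Plugging in coordinate-wise, one obtains
\begin{equation*}
F(\xx) = \sum_{i=1}^n \lambda_i \widetilde{C}_i(F_1(x_1),\dots,F_d(x_d)) = \sum_{i=1}^n \lambda_i \widetilde{F}_i(\xx), \qquad \xx \in \RR^d.
\end{equation*}
Conversely, given any such convex combination $F = \sum_{i=1}^n \lambda_i \widetilde{F}_i$, the $j$-th univariate marginal is $\sum_{i=1}^n \lambda_i F_j = F_j$, so the marginals are preserved; moreover $F(\xx) = (\sum_i \lambda_i \widetilde{C}_i)(F_1(x_1),\dots,F_d(x_d))$ and $\sum_i \lambda_i \widetilde{C}_i \in \mathcal{C}$ because $\mathcal{C}$ is convex, so $F \in \mathcal{F}_d(F_1,\dots,F_d)$. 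This establishes that $\mathcal{F}_d(F_1,\dots,F_d)$ is the convex hull of $\{\widetilde{F}_1,\dots,\widetilde{F}_n\}$.

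The remaining step is to verify that each $\widetilde{F}_i$ is genuinely extremal, i.e.\ not a non-trivial convex combination of other elements of $\mathcal{F}_d(F_1,\dots,F_d)$. Suppose $\widetilde{F}_i = \alpha G_1 + (1-\alpha) G_2$ with $\alpha \in (0,1)$ and $G_1,G_2 \in \mathcal{F}_d(F_1,\dots,F_d)$. Writing $G_k(\xx) = D_k(F_1(x_1),\dots,F_d(x_d))$ for copulas $D_k \in \mathcal{C}$, we obtain
\begin{equation*}
\widetilde{C}_i(F_1(x_1),\dots,F_d(x_d)) = \alpha D_1(F_1(x_1),\dots,F_d(x_d)) + (1-\alpha)D_2(F_1(x_1),\dots,F_d(x_d))
\end{equation*}
for every $\xx \in \RR^d$. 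The identity then propagates to all of $[0,1]^d$ by the standard uniqueness argument in Sklar's theorem (continuity of copulas together with the density of the range of $(F_1,\dots,F_d)$ in the closure of the unit cube), yielding $\widetilde{C}_i = \alpha D_1 + (1-\alpha) D_2$ and contradicting the extremality of $\widetilde{C}_i$ in $\mathcal{C}$.

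The main obstacle, and the only point requiring any care, is this last extension from equality on $\mathrm{Range}(F_1)\times \cdots \times \mathrm{Range}(F_d)$ to equality on $[0,1]^d$; for continuous marginals it is immediate, while for discrete marginals it uses the fact that a copula is determined on the closure of the range of the marginals and extended uniquely in the framework of Sklar's theorem adopted here. Everything else is a direct transport of the polytope structure through the affine map $C \mapsto C \circ (F_1,\dots,F_d)$.
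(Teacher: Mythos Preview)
Your first paragraph---writing $F(\xx)=C(F_1(x_1),\dots,F_d(x_d))$, decomposing $C=\sum_i\lambda_i\widetilde C_i$, and substituting---is exactly the paper's proof; the paper stops there and declares the result. So on the part the paper actually proves, you match it precisely.

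You then go further than the paper in two directions: you check closure of $\mathcal{F}_d(F_1,\dots,F_d)$ under convex combinations, and you attempt to verify that each $\widetilde F_i$ is genuinely extremal. The first of these is fine. The second works cleanly for continuous marginals, but your handling of the discrete case is not correct: for non-continuous $F_j$ the copula is \emph{not} uniquely determined by $F$ (Sklar's theorem only gives uniqueness on $\overline{\operatorname{Ran} F_1}\times\cdots\times\overline{\operatorname{Ran} F_d}$, and there is no canonical ``unique extension'' to $[0,1]^d$), so equality on the ranges does not force $\widetilde C_i=\alpha D_1+(1-\alpha)D_2$ as copulas. In that situation it is entirely possible that distinct extremal copulas $\widetilde C_i,\widetilde C_j$ yield the same $\widetilde F_i=\widetilde F_j$, or that some $\widetilde F_i$ ceases to be a vertex of the image polytope. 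The paper sidesteps this by using ``extremal points'' loosely to mean a finite generating set for the convex hull, not vertices in the strict sense; with that reading your first paragraph already suffices, and the extremality discussion is unnecessary.
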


	\begin{proof}
		Consider $F_{\XX} \in \FFF_d(F_1,\dots,F_d)$. Then, for $\xx \in \RR^d$, we have
		\begin{align*}
			F_{\XX}(\xx)&=C(F_1(x_1),...,F_d(x_d)) \\
			&=\sum_{i=1}^{n} \lambda_i \widetilde{C}_{i}(F_1(x_1),...,F_d(x_d)),
		\end{align*}
		for some $\lambda_1,...,\lambda_n \geq 0$ such that $\sum_{i=1}^{n} \lambda_i=1$. Define $ \widetilde{F}_{i}(\xx)=\widetilde{C}_{i}(F_1(x_1),...,F_d(x_d))$ and the desired result directly follows.
	\end{proof}

	\begin{corollary}\label{corollary3.1}
		The class $\GGG_d^{\pp}(F_1,\dots,F_d)$ of distributions is a convex polytope with extremal points the distributions associated to the extremal points $\rr_k$ of $\BBB_d(\pp)$, $k \in \{1,\ldots, n^{\mathcal{B}}_{\pp}\}$.
	\end{corollary}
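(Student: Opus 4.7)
The plan is to apply the preceding Proposition directly to the specific class $\mathcal{C} = \mathcal{C}_d^{\pp}$ of GFGM copulas with vector of parameters $\pp$. So, first I would recall the structural fact established earlier in the section: the representation \eqref{eq:CopConvex} shows that any $C\in \mathcal{C}_d^{\pp}$ is a convex combination of the copulas $C_{\RRR_k}$, $k \in \{1,\ldots,n_{\pp}^{\BBB}\}$, where $C_{\RRR_k}$ is the copula associated to $\UU^{(\RRR_k)}$ constructed through the stochastic representation \eqref{eq:representation-u} from the extremal point $\rr_k$ of the polytope $\BBB_d(\pp)$. Consequently, $\mathcal{C}_d^{\pp}$ is itself a convex polytope whose extremal points are precisely the copulas $\widetilde{C}_k := C_{\RRR_k}$ for $k \in \{1,\ldots,n_{\pp}^{\BBB}\}$.

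Next, I would invoke the Proposition with $\mathcal{C} = \mathcal{C}_d^{\pp}$ and $n = n_{\pp}^{\BBB}$. By that result, the class $\GGG_d^{\pp}(F_1,\dots,F_d)$ is a convex polytope and its extremal points are the multivariate distributions
\begin{equation*}
\widetilde{F}_k(\xx) = C_{\RRR_k}(F_1(x_1),\dots,F_d(x_d)),\qquad k \in \{1,\ldots,n_{\pp}^{\BBB}\},
\end{equation*}
which are exactly the joint distributions whose copula is the GFGM copula generated (via \eqref{eq:representation-u}) by the extremal Bernoulli pmf $\rr_k \in \BBB_d(\pp)$, and whose univariate marginals are $F_1,\dots,F_d$. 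This is exactly the statement of the corollary.

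Since essentially everything needed has already been assembled in the excerpt, I do not expect any real obstacle; the only point to double-check is that the extremal points of $\mathcal{C}_d^{\pp}$ are in one-to-one correspondence with the extremal points of $\BBB_d(\pp)$, which follows from the stochastic representation \eqref{eq:representation-u} because the map $\II \mapsto F_{\UU^{(\II)}}$ is linear (up to the independent noise $\UU_0,\UU_1$) and preserves convex combinations, as explicitly verified in the derivation leading to \eqref{eq:CopConvex}. Thus the corollary follows as an immediate consequence of the Proposition applied to $\mathcal{C}_d^{\pp}$.
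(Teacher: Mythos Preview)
Your proposal is correct and matches the paper's approach: the corollary is stated without proof as an immediate consequence of the preceding Proposition applied to $\mathcal{C}=\mathcal{C}_d^{\pp}$, using the convex representation \eqref{eq:CopConvex} established just before. Your additional remark about the one-to-one correspondence is a fair sanity check, though the paper does not spell it out either.
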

	
	\section{A new representation theorem and consequences}\label{newrep}
	
	We generalize the results of Section \ref{GFGMprel} and introduce a stochastic representation for any random vector $\XX$ with distribution in $\GGG_d^{\pp}(F_1,\dots,F_d)$. We then use this representation in two particular cases, more precisely with exponential and discrete marginals for which the expression of the distribution of the sum is analytical.
	The notation $\XX \in \GGG_d^{\pp}(F_1,\dots,F_d)$ indicates that the cdf of $\XX$, $F_{\XX}$, belongs to the class $\GGG_d^{\pp}(F_1,\dots,F_d)$.
	
	\begin{theorem} \label{thm_stoch_repre_X}
		Let $V_{0,1},\dots,V_{0,d},V_{1,1},\dots,V_{1,d}$ be independent random variables, where $V_{0,j} \sim Beta\left(\frac{1}{1-p_j},1\right)$, $p_j \in (0,1)$, and $V_{1,j}\sim U(0,1)$, for $j \in \{1,\dots,d\}$. 
		Fix some margins $F_1,\ldots,F_d$ and, for $h \in \{0,1\}$, let $\ZZ_h=(Z_{h,1},\ldots, Z_{h,d})$ be vectors of independent random variables with 
		$Z_{0,j} \overset{\mathcal{L}}{=} F^{-1}_j(V_{0,j})$ and $Z_{1,j} \overset{\mathcal{L}}{=} F^{-1}_j(V_{0,j}V_{1,j})$, for all $j \in \{1,\ldots,d\}$.
		Define the random vector
		\begin{equation}\label{stoch_repre_X}
			\XX=(\boldsymbol{1}-\II)\ZZ_0+\II \ZZ_1,
		\end{equation}
		where $\II\in \BBB_d(\pp)$, $\pp = (p_1,\dots,p_d)$. 
		Then we have $F_{\XX} \in \GGG_d^{\pp}(F_1,\ldots,F_d)$, that is the distribution of the random vector $\XX$ has margins $F_1,\dots, F_d$ and copula $C\in \mathcal{C}_d^{\pp}$.
	\end{theorem}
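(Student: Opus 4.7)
The plan is to reduce the statement to the stochastic representation \eqref{eq:representation-u}, which is already known to produce uniform margins and a GFGM copula with parameter vector $\pp$. Concretely, I aim to establish the distributional identity $\XX \overset{\mathcal{L}}{=} (F_1^{-1}(U_1),\ldots,F_d^{-1}(U_d))$, where $\UU = \UU_0^{1-\pp}\UU_1^{\II}$ is the vector from \eqref{eq:representation-u}. Once this identity is proved, the margins of $\XX$ are $F_1,\ldots,F_d$ and its joint cdf is $C(F_1(x_1),\ldots,F_d(x_d))$ for the GFGM copula $C$ of $\UU$, which is exactly the content of the theorem.

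The first step collapses the two branches of the mixture \eqref{stoch_repre_X} into a single formula. Since $I_j\in\{0,1\}$, the $j$-th coordinate of $\XX$ equals $Z_{0,j}=F_j^{-1}(V_{0,j})$ when $I_j=0$ and $Z_{1,j}=F_j^{-1}(V_{0,j}V_{1,j})$ when $I_j=1$. Using $V_{1,j}^{0}=1$ and $V_{1,j}^{1}=V_{1,j}$, both cases are subsumed in
\begin{equation*}
    X_j = F_j^{-1}\bigl(V_{0,j}\,V_{1,j}^{I_j}\bigr),\qquad j\in\{1,\ldots,d\}.
\end{equation*}
The second step identifies the auxiliary Beta factors with powers of uniforms. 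A direct cdf computation gives $\Pr(U_{0,j}^{1-p_j}\leq v)=v^{1/(1-p_j)}$ on $[0,1]$, which is precisely the cdf of $Beta(1/(1-p_j),1)$, so $V_{0,j}\overset{\mathcal{L}}{=}U_{0,j}^{1-p_j}$. Combined with the mutual independence of $\II$, $(V_{0,j})_{j}$, and $(V_{1,j})_{j}$, this yields
\begin{equation*}
    \bigl(V_{0,j}\,V_{1,j}^{I_j}\bigr)_{j=1}^d \overset{\mathcal{L}}{=} \bigl(U_{0,j}^{1-p_j}\,U_{1,j}^{I_j}\bigr)_{j=1}^d \overset{\mathcal{L}}{=} \UU,
\end{equation*}
with $\UU$ having a GFGM$(\pp)$ copula by \eqref{eq:representation-u} and \eqref{eq:copula-natural}. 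Applying the componentwise map $(w_1,\ldots,w_d)\mapsto(F_1^{-1}(w_1),\ldots,F_d^{-1}(w_d))$ to both sides, together with the first step, delivers $\XX\overset{\mathcal{L}}{=}(F_1^{-1}(U_1),\ldots,F_d^{-1}(U_d))$, which is the desired identification.

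I do not expect a real obstacle: the argument is essentially bookkeeping, and the only genuine computation is the $Beta(1/(1-p_j),1)$ cdf identity. The mildest care point, which does not affect the conclusion, concerns margins $F_j$ that are not continuous or not strictly increasing: in that case $F_j^{-1}$ should be read as the left-continuous generalized inverse, so that $\{F_j^{-1}(W)\leq x\}=\{W\leq F_j(x)\}$ continues to hold almost surely. This is the standard Sklar-type manipulation already used implicitly in the derivation of \eqref{eq:CopConvex} in Section \ref{GFGMprel}.
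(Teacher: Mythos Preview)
Your argument is correct and proceeds along the same line as the paper's proof: both reduce the statement to the known representation \eqref{eq:representation-u} by identifying the conditional law of $U_j$ given $I_j$ with that of $V_{0,j}$ (when $I_j=0$) and $V_{0,j}V_{1,j}$ (when $I_j=1$). The difference is only one of direction and bookkeeping. The paper starts from $\UU=\UU_0^{1-\pp}\UU_1^{\II}$, computes explicitly (citing \cite{blier2024new}) the conditional cdfs $F_{U_j\mid I_j=0}(u)=u^{1/(1-p_j)}$ and $F_{U_j\mid I_j=1}(u)=\tfrac{u}{p_j}-\tfrac{1-p_j}{p_j}u^{1/(1-p_j)}$, and then recognizes these as the laws of $V_{0,j}$ and $V_{0,j}V_{1,j}$. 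You instead observe directly that $V_{0,j}\overset{\mathcal{L}}{=}U_{0,j}^{1-p_j}$ via the Beta cdf, whence $V_{0,j}V_{1,j}^{I_j}\overset{\mathcal{L}}{=}U_{0,j}^{1-p_j}U_{1,j}^{I_j}$, and apply $F_j^{-1}$ componentwise. Your route is marginally cleaner because it bypasses the explicit form of $F_{U_j\mid I_j=1}$; the paper's route has the advantage of displaying that conditional cdf, which it reuses later in Example~\ref{ex:DiscreteMargins}.

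One small point of care: in your first step you write $Z_{0,j}=F_j^{-1}(V_{0,j})$ and $Z_{1,j}=F_j^{-1}(V_{0,j}V_{1,j})$ as pointwise equalities, whereas the theorem only specifies these in law (and implicitly requires $\II$, $\ZZ_0$, $\ZZ_1$ independent). Since only the distribution of $\XX$ is at stake and the conditional law of $X_j$ given $I_j=i$ depends only on the marginal of $Z_{i,j}$, this substitution is harmless; but it would be cleaner to phrase it as ``without loss of generality we realize $Z_{0,j}$ and $Z_{1,j}$ via the same $V_{0,j}$'' or to keep the $\overset{\mathcal{L}}{=}$ throughout.
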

	\begin{proof}
		Let $\UU\in \mathcal{C}_d^{\pp}$ and hence $\UU=\UU_0^{1-\pp}\UU^{\II}$, where $\II\in \BBB_d(\pp)$, with $\pp = (p_1,\dots,p_d)$. In the proof of Theorem 2 of \cite{blier2024new}, we have
		\begin{equation*}
			F_{U_j}(u)=\Pr(U_{0,j}^{1-p_j}U_{1,j}^{I_j}\leq u)=\Pr(I_j=0)F_{U_j|I_j=0}(u)+\Pr(I_j=1)F_{U_j|I_j=1}(u),
		\end{equation*}
		where 
		\begin{equation*}
			F_{U_j|I_j=0}(u)=u^{\frac{1}{1-p_j}}
		\end{equation*}
		and 
		\begin{equation*}
			F_{U_j|I_j=1}(u)=\frac{u}{p_j}-\frac{1-p_j}{p_j}u^{\frac{1}{1-p_j}}, \quad j \in \{1,\ldots, d\}.
		\end{equation*}
		Thus $(U_j|I_j=0)\overset{\mathcal{L}}{=}V_{0,j}$ and $(U_j|I_j=1)\overset{\mathcal{L}}{=}V_{0,j}V_{1,j}$, $j \in \{1,\ldots, d\}$, and the assert follows.
	\end{proof} 
	
	Obviously, the special case of uniform margins leads to a stochastic representation of the GFGM family of copulas equivalent to \eqref{eq:representation-u}.
	We notice that for $p=\tfrac{1}{2}$ we find the stochastic representation of FGM copulas in \cite{blier2024exchangeable}.
	
	Theorem \ref{thm_stoch_repre_X} provides a very useful representation of the random vector $\XX$. This helps us to derive plenty of results such as examining the distribution of any integrable function of $\XX$ and analyzing pairwise dependence properties of $\XX$. The following corollary considers the expectation of functionals of $\XX$ for which we derive bounds, in Section \ref{sec:SharpBounds}, building on the geometrical structure of Bernoulli vectors, the distribution of the sum $S^{(\XX)} = X_1 + \dots + X_d$ which may represent aggregate risks in a portfolio and some results on correlation between components of $\XX$ to analyze dependence corresponding to minimal aggregate risks. 
	
	\begin{corollary}
		Let $\XX\in \mathcal{G}^{\pp}_d(F_1, \ldots, F_d)$, and let $r_k$, $k \in \{ 1,\dots,n_{\pp}^{\BBB} \}$, be the extremal pmfs of $\BBB_d(\pp)$. The following holds.
		\begin{enumerate}
			\item \label{PointEphi} Let $\varphi \colon \RR^d \to \RR$ be a real-valued function for which the expectation exists, we have
			\begin{align}\label{eq:ephi}
				E[\varphi(X_1,\dots,X_d) ] 
				& = \sum_{k=1}^{n_{\pp}^{\BBB}} 
				\lambda_k 
				\sum_{\boldsymbol{i} \in \{0,1\}^d} 
				r_k( \ii)
				E[\varphi(Z_{i_1,1},\dots,Z_{i_d,d})],
			\end{align}
			
			\item \label{PointFS} The distribution of the sum $S^{(\XX)}=\sum_{j}^dX_j$ is given by
			\begin{align}
				F_{S^{(\XX)}}(y) 
				=
				\sum_{k=1}^{n_{\pp}^{\mathcal{B}}}\lambda_k 
				\sum_{\ii \in \{0,1\}^d} r_k(\ii) F_{\sum_{j=1}^d Z_{i_j,j}} (y),
				\quad 
				y \in \RR.
				\label{eq:FSwithExtremalBeautiful}
			\end{align}
			\item \label{Point_Corr} The covariance  between each pair of components $(X_{j_1},X_{j_2})$ of $\XX$ is
			\begin{equation}\label{covariance}
				Cov(X_{j_1},X_{j_2}) 
				= Cov(I_{j_1},I_{j_2}) \gamma_{j_1,j_2}
				= \nu_{j_1,j_2} p_{j_1} p_{j_2} \gamma_{j_1,j_2},
			\end{equation}
			where
			\begin{equation*}
				\gamma_{j_1,j_2} = (E[Z_{1,j_1}] -E[Z_{0,j_1}])
				(E[Z_{1,j_2}] -E[Z_{0,j_2}]),
				\quad 1 \leq j_1 < j_2 \leq d.
			\end{equation*}
			The sharp bounds of the covariance are
			\begin{equation*} \label{eq:covariance_bounds}
				(\max(p_{j_1}+p_{j_2}-1,0)-p_{j_1}p_{j_2} ) \gamma_{j_1,j_2} \leq
				Cov(X_{j_1},X_{j_2}) \leq 
				p_{j_1}(1-p_{j_2}) \gamma_{j_1,j_2},
				\quad 1 \leq j_1 < j_2 \leq d.
			\end{equation*}
			
			\item If $\XX$ has continuous marginal distributions, Spearman's rho between each pair of components $(X_{j_1},X_{j_2})$ of $\XX$, $1 \leq j_1 < j_2 \leq d$, is
			\begin{align*} \label{eq:SpearmanRho}
				\rho_S(X_{j_1},X_{j_2}) 
				&=
				\frac{3 \, Cov(I_{j_1},I_{j_2})}{(2-p_{j_1})(2-p_{j_2})}
				= \frac{3 \, \nu_{j_1,j_2} p_{j_1} p_{j_2}}{(2-p_{j_1})(2-p_{j_2})}
				,
			\end{align*}
			where $\II = (I_1,\dots,I_d)$ is the Bernoulli random vector corresponding to $\XX$ of Theorem~\ref{thm_stoch_repre_X}. The sharp bounds of the Spearman's rho are:
			\begin{equation*} \label{eq:Bounds}
				\frac{3(\max(p_{j_1}+p_{j_2}-1,0)-p_{j_1}p_{j_2} )}{(2-p_{j_1})(2-p_{j_2})} 
				\leq 
				\rho_S(X_{j_1},X_{j_2}) 
				\leq \frac{3 p_{j_1}(1-p_{j_2})}{(2-p_{j_1})(2-p_{j_2})},
			\end{equation*}
			
		\end{enumerate}
	\end{corollary}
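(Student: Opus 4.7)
The plan is to derive all four statements from a single conditioning argument applied to the stochastic representation $\XX=(\boldsymbol{1}-\II)\ZZ_0+\II\ZZ_1$ in Theorem~\ref{thm_stoch_repre_X}, together with the convex decomposition $f_{\II}=\sum_{k=1}^{n_{\pp}^{\BBB}}\lambda_k r_k$ of the underlying Bernoulli pmf.

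For Part \ref{PointEphi}, I would condition on $\II$. For each realization $\ii\in\{0,1\}^d$, the components of $\XX$ become $X_j=Z_{i_j,j}$, and since $\II$ is independent of $(\ZZ_0,\ZZ_1)$, we get $\expval[\varphi(\XX)\mid \II=\ii]=\expval[\varphi(Z_{i_1,1},\dots,Z_{i_d,d})]$. Summing over $\ii$ weighted by $f_{\II}(\ii)$ and substituting $f_{\II}=\sum_k \lambda_k r_k$ yields \eqref{eq:ephi} by linearity. Part \ref{PointFS} is then an immediate specialization: take $\varphi(\xx)=\boldsymbol{1}_{\{x_1+\cdots+x_d\leq y\}}$, so the inner expectation becomes $F_{\sum_j Z_{i_j,j}}(y)$, giving \eqref{eq:FSwithExtremalBeautiful}.

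For Part \ref{Point_Corr}, I would compute $\expval[X_{j_1}]$, $\expval[X_{j_2}]$, and $\expval[X_{j_1}X_{j_2}]$ directly from the representation. Let $a_h=\expval[Z_{h,j_1}]$ and $b_h=\expval[Z_{h,j_2}]$ for $h\in\{0,1\}$. By independence of $\II$, $\ZZ_0$, $\ZZ_1$ and of the components inside each $\ZZ_h$, a four-term expansion gives
\begin{equation*}
\expval[X_{j_1}X_{j_2}] = \sum_{i_1,i_2\in\{0,1\}} \probab(I_{j_1}=i_1,I_{j_2}=i_2)\, a_{i_1} b_{i_2},
\end{equation*}
while $\expval[X_{j_{\ell}}]=(1-p_{j_{\ell}})\,c^{(\ell)}_0+p_{j_{\ell}}\,c^{(\ell)}_1$ with the obvious shorthand. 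Subtracting and using the identities $\probab(I_{j_1}=0,I_{j_2}=0)-(1-p_{j_1})(1-p_{j_2})=\Cov(I_{j_1},I_{j_2})$, $\probab(I_{j_1}=1,I_{j_2}=0)-p_{j_1}(1-p_{j_2})=-\Cov(I_{j_1},I_{j_2})$, etc., collapses the four terms to $\Cov(I_{j_1},I_{j_2})\,(a_1-a_0)(b_1-b_0)$, which is exactly \eqref{covariance} once one recognizes $\Cov(I_{j_1},I_{j_2})=\nu_{j_1 j_2}\,p_{j_1}p_{j_2}$ from the definition of $\nu_{j_1 j_2}$. The sharp bounds follow by applying the Fr\'echet--Hoeffding bounds to the bivariate Bernoulli $(I_{j_1},I_{j_2})$: $\max(p_{j_1}+p_{j_2}-1,0)\leq \probab(I_{j_1}=1,I_{j_2}=1)\leq \min(p_{j_1},p_{j_2})$, and the sharpness is inherited because extremal Bernoulli vectors are themselves realized inside $\BBB_d(\pp)$.

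For Part 4, under continuous margins Spearman's rho of $(X_{j_1},X_{j_2})$ depends only on the bivariate marginal copula, which from Definition \ref{eq:copula-GFGM} reads $C_{j_1 j_2}(u,v)=uv(1+\nu_{j_1 j_2}(1-u^{b_{j_1}})(1-v^{b_{j_2}}))$. Using the standard formula $\rho_S=12\int_0^1\!\int_0^1 C_{j_1 j_2}(u,v)\,du\,dv-3$, the constant part contributes $3$ and cancels, leaving $\rho_S=12\,\nu_{j_1 j_2}\int_0^1 u(1-u^{b_{j_1}})\,du\int_0^1 v(1-v^{b_{j_2}})\,dv$. Each integral equals $\tfrac{1}{2}-\tfrac{1}{b_{j_{\ell}}+2}=\tfrac{b_{j_{\ell}}}{2(b_{j_{\ell}}+2)}$, and substituting $b_{j}=p_{j}/(1-p_{j})$ gives $b_{j}/(b_j+2)=p_j/(2-p_j)$. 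This produces the stated formula, and the bounds on $\rho_S$ follow by plugging the Fr\'echet--Hoeffding bounds on $\Cov(I_{j_1},I_{j_2})=\nu_{j_1 j_2}p_{j_1}p_{j_2}$ into it.

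The main technical obstacle is the bookkeeping in Part \ref{Point_Corr}: one must verify that the four combinations of deviations from marginal probabilities collapse uniformly into $\pm\Cov(I_{j_1},I_{j_2})$ so that the cross term factors as a product of marginal mean-differences. Everything else is either a conditional expectation argument or a direct integration.
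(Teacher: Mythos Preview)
Your proposal is correct. Parts~\ref{PointEphi}--\ref{Point_Corr} match the paper's proof essentially verbatim: the paper also writes $X_j=Z_{I_j,j}$, conditions on $\II$ to obtain $E[\varphi(\XX)]=\sum_{\ii}f_{\II}(\ii)E[\varphi(Z_{i_1,1},\dots,Z_{i_d,d})]$, substitutes the convex decomposition of $f_{\II}$ for Part~\ref{PointEphi}, specializes to the indicator $\varphi(\xx)=\boldsymbol{1}\{x_1+\dots+x_d\le y\}$ for Part~\ref{PointFS}, and takes $\varphi(\xx)=x_{j_1}x_{j_2}$ for Part~\ref{Point_Corr} (your four-term collapse is exactly what the paper leaves implicit). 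The covariance bounds via Fr\'echet--Hoeffding on the bivariate Bernoulli pair are also the same argument; the paper cites Section~3.2 of \cite{fontana2018representation} for these.

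Part~4 is where you take a genuinely different route. The paper stays with the stochastic representation $U_j=U_{0,j}^{1-p_j}U_{1,j}^{I_j}$ and computes $E[U_{j_1}U_{j_2}]$ by factoring, conditioning on $(I_{j_1},I_{j_2})$, and then reassembling into $\rho_P(U_{j_1},U_{j_2})=3\,Cov(I_{j_1},I_{j_2})/((2-p_{j_1})(2-p_{j_2}))$. You instead pull the bivariate marginal copula $C_{j_1j_2}(u,v)=uv\bigl(1+\nu_{j_1j_2}(1-u^{b_{j_1}})(1-v^{b_{j_2}})\bigr)$ directly from Definition~\ref{eq:copula-GFGM} and evaluate the standard formula $\rho_S=12\iint C-3$ by a one-line separable integral. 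Your computation is correct (each factor $\int_0^1 u(1-u^{b})\,du=\tfrac{b}{2(b+2)}=\tfrac{p}{2(2-p)}$ gives exactly the stated expression), and it has the advantage of bypassing the auxiliary random variables $U_{0,j},U_{1,j}$ altogether; the paper's route, on the other hand, makes the link to $Cov(I_{j_1},I_{j_2})$ appear more structurally, without having to recall that $\nu_{j_1j_2}p_{j_1}p_{j_2}=Cov(I_{j_1},I_{j_2})$ at the end.
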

	
	\begin{proof}
		From the stochastic representation in  \eqref{stoch_repre_X} of Theorem \ref{thm_stoch_repre_X}, we have
		\begin{equation*}
			X_j=\begin{cases}
				Z_{1,j} \,\,\, &\text{if } I_j=1\\
				Z_{0,j} \,\,\, &\text{if } I_j=0
			\end{cases}
			, \quad j \in \{1,\dots,d\}.
		\end{equation*}
		Thus, $X_j = Z_{I_j,j}$, $j \in \{1,\dots,d\}$, and by conditioning on $\II$, 
		the expectation of a function $\varphi$ of $\XX$ is given by
		\begin{align}
			E[\varphi(X_1,\dots,X_d) ] 
			& =
			\sum_{\boldsymbol{i} \in \{0,1\}^d} 
			f_{\II}(\boldsymbol{i})
			E[\varphi(Z_{i_1,1},\dots,Z_{i_d,d})], \label{eq:BeautifulFormula}
		\end{align}
		assuming that the expectations exist. 
		\begin{enumerate}
			\item By replacing \eqref{eq:ConvexRepresentationRays} in \eqref{eq:BeautifulFormula}, \eqref{eq:ephi} follows.

			\item By choosing $\varphi(\xx) = \boldsymbol{1}\{ x_1+\dots+x_d \leq y \}$, where $\boldsymbol{1}\{ A \} = 1$, if $A$ is true, and $\boldsymbol{1}\{ A \} = 0$, otherwise,
			\eqref{eq:FSwithExtremalBeautiful} follows.

			\item
			Covariance between $X_{j_1}$ and $X_{j_2}$ in \eqref{covariance}, $1 \leq j_1 < j_2 \leq d$, follows by considering $\varphi(\xx)=x_{j_1}x_{j_2}$ in \eqref{eq:BeautifulFormula}.
			\item Spearman's rho $\rho_S$ for any pair of continuous rvs $(X_{j_1}, X_{j_2})$, $1 \leq j_1 < j_2 \leq d$, is given by
			\begin{equation} \label{eq:RhoSpearman1}
				\rho_S(X_{j_1},X_{j_2}) = \rho_P(U_{j_1},U_{j_2}) 
				= \frac{E[U_{j_1}U_{j_2}] - E[U_{j_1}]E[U_{j_2}]}
				{\sqrt{Var(U_{j_1}) Var(U_{j_2})}},
			\end{equation}
			where $\rho_P$ indicates the Pearson's correlation and $Var(U_{j_1}) = Var(U_{j_2}) = \frac{1}{12}$.
			Using the representation in \eqref{eq:representation-u},  the expression for $E[U_{j_1}U_{j_2}]$, $1 \leq j_1 < j_2 \leq d$, is 
			\begin{align*}
				E[U_{j_1}U_{j_2}] 
				=
				E[U_{0,j_1}^{1-{p_{j_1}}} U_{0,j_2}^{1-p_{j_2}} U_{1,j_1}^{I_{j_1}} U_{1,j_2}^{I_{j_2}}] 
				=
				E[U_{0,j_1}^{1-p_{j_1}}] E[U_{0,j_2}^{1-p_{j_2}}] E[U_{1,j_1}^{I_{j_1}} U_{1,j_2}^{I_{j_2}}],
			\end{align*}
			where
			\begin{equation*}
				E[U_{0,j_1}^{1-p}] E[U_{0,j_2}^{1-p}]
				=
				\frac{1}{2-p_{j_1}} \frac{1}{2-p_{j_2}}
			\end{equation*}
			and
			\begin{align}
				\label{eq:EU1U2}
				E[U_{1,j_1}^{I_{j_1}} U_{1,j_2}^{I_{j_2}}]
				& = 
				\sum_{(i_{j_1}, i_{j_2} )\in \{0,1\}^2}
				f_{\II}^{(j_1, j_2)} (i_{j_1},i_{j_2})
				E[ U_{1,j_1}^{i_{j_1}}] E[U_{1,j_2}^{i_{j_2}}],
			\end{align}
			where $f_{\II}^{(j_1, j_2)}$ denotes the $(j_1,j_2)$-marginal pmf of $\II=(I_1,\dots,I_d)$.
			Finally, replacing \eqref{eq:EU1U2} in \eqref{eq:RhoSpearman1}, the expression of the Spearman's rho is given by
			\begin{align*} 
				\rho_S(X_{j_1},X_{j_2}) 
				&=
				\frac{12 f_{\II}^{(j_1, j_2)}(0,0)
					+ 6 (f_{\II}^{(j_1, j_2)}(0,1) + f_{\II}^{(j_1, j_2)}(1,0))
					+3 f_{\II}^{(j_1, j_2)}(1,1)}{(2-p_{j_1})(2-p_{j_2})} 
				- 3 \notag \\
				& 
				=\frac{3(f_{\II}^{(j_1, j_2)}(1,1)-p_1p_2) }{(2-p_{j_1})(2-p_{j_2})}  \notag\\
				& =
				\frac{3 \, Cov(I_{j_1},I_{j_2})}{(2-p_{j_1})(2-p_{j_2})},
			\end{align*}
			for $1 \leq j_1 < j_2 \leq d$. 
			
			Assume $p_{j_1} \leq p_{j_2}$. 
			In Section 3.2 of \cite{fontana2018representation},  the authors found the bounds for the covariance between $(I_{j_1}, I_{j_2})$ for any pair of Bernoulli variables. 
			The maximum value is $Cov(I_{j_1},I_{j_2})=p_{j_1}(1-p_{j_2})$,  while the minimum value is $Cov(I_{j_1},I_{j_2})=\max(p_{j_1}+p_{j_2}-1,0) - p_{j_1}p_{j_2}$. In the first case, $E[I_{j_1}I_{j_2}] = \Pr(I_{j_1}=1,I_{j_2}=1) = p_{j_1}$ and $Cov(I_{j_1},I_{j_2})=p_{j_1}(1-p_{j_2})$.  
			In the second case, $E[I_{j_1}I_{j_2}] = \Pr(I_{j_1}=1,I_{j_2}=1) = \max(p_{j_1}+p_{j_2}-1,0)$ and $Cov(I_{j_1},I_{j_2}) = \max(p_{j_1}+p_{j_2}-1,0) - p_{j_1}p_{j_2}$. Therefore, we have the following  bounds for Spearman's rho of any pair of continuous rvs
			\begin{equation*}
				\frac{3(\max(p_{j_1}+p_{j_2}-1,0)-p_{j_1}p_{j_2} )}{(2-p_{j_1})(2-p_{j_2})} 
				\leq 
				\rho_S(X_{j_1},X_{j_2}) 
				= 
				\rho_P(U_{j_1},U_{j_2}) 
				\leq \frac{3 p_{j_1}(1-p_{j_2})}{(2-p_{j_1})(2-p_{j_2})},
			\end{equation*}
			for $1 \leq j_1 < j_2 \leq d$. 
		\end{enumerate}
	\end{proof}
	The relation \eqref{eq:ephi} in particular holds for $\phi(S^{(\XX)})=\varphi(\XX)$  and implies that the bounds of $E[\phi(S^{(\XX)})]$, for any $\phi$ for which the expectation exists,   can be found by enumerating their values on the sums $S^{(\RRR_k)},$ $k \in \{1,\ldots, n_p^{\BBB}\}$, and this is computationally feasible in low dimension. 
	
	
	We end this section considering two examples for $\mathcal{G}^{\pp}_d(F_1,\ldots,F_d)$, where the margins are discrete in the first one and the margins are exponential in the second one.

	\begin{example} [Discrete margins]
		\label{ex:DiscreteMargins}
		Let $\XX = (X_1,\ldots,X_d)$ be defined as a $d$-dimensional random vector, where, for every $j \in \{1,\ldots,d\}$, $X_j$ is a discrete random variable taking values on the set $A_j=\{0,1,\ldots,n_j\}$, $n_j \in \mathbb{N}$, and with cdf $F_j$.
		When the joint distribution of $\XX$ belongs to the class $\mathcal{G}^{\pp}_d(F_1,\ldots,F_d)$, $\XX$ admits the representation in \eqref{stoch_repre_X}. 
		Moreover, for all $j \in \{1,\ldots,d\}$, to derive the values of the pmf of the discrete rvs $Z_{i,j},\, i=0,1$, we observe that, for each $k \in A_j$, we have 
		\begin{equation*}
			f_{Z_{0,j}}(k) = \Pr(Z_{0,j}=k) = \Pr(F_j^{-1}(V_{0,j}) = k) = F_{V_{0,j}}(F_j(k))-F_{V_{0,j}}(F_j(k-1))
		\end{equation*} 
		and
		\begin{equation*}
			f_{Z_{1,j}}(k) = \Pr(Z_{1,j}=k) = \Pr(F_j^{-1}(V_{0,j}V_{1,j}) = k) = F_{V_{0,j}V_{1,j}}(F_j(k)) - F_{V_{0,j}V_{1,j}}(F_j(k-1)),
		\end{equation*} 
		where 
		\begin{equation*}
			F_{V_{0,j}}(u)=u^{\frac{1}{1-p_j}},\quad u\in(0,1),
		\end{equation*}
		
		\begin{equation*}
			F_{V_{0,j}V_{1,j}}(u)=\frac{u}{p_j}-\frac{1-p_j}{p_j}u^{\frac{1}{1-p_j}},\quad u\in(0,1),
		\end{equation*}
		and $F_j(-1):=0$.
		For all $j\in\{1,\ldots,d\}$, it follows that the expectations of the discrete rvs $Z_{i,j},\, i=0,1$, are
		\begin{equation*}
			E[Z_{0,j}]=n_j-\sum_{k=0}^{n_j-1} F_{V_0,j}(F_j(k))
		\end{equation*}
		and
		\begin{equation*}
			E[Z_{1,j}]=n_j-\sum_{k=0}^{n_j-1} F_{V_0V_1,j}(F_j(k)).
		\end{equation*}
		Finally, we consider the sum $S^{(\XX)}=X_1+\cdots+X_d$, which can be rewritten using the representation in \eqref{stoch_repre_X}, as 
		\begin{equation}\label{eq:RepresGFGMSdv2}
			S^{(\XX)}=\sum_{j=1}^d (1-I_j)Z_{0,j} + I_j Z_{1,j}. 
		\end{equation}
		From \eqref{eq:RepresGFGMSdv2}, the expression of the probability generating function (pgf) of $S^{(\XX)}$ is given by
		\begin{equation}
			\label{eq:PfgSdiscrete}
			\mathcal{P}_{S^{(\XX)}}(s)
			=
			\sum_{\ii \in \{0,1\}^d} 
			f_{\II}(\ii)
			\prod_{j=1}^d \mathcal{P}_{Z_{i_j,j}}(s),
			\quad 
			s \in [-1,1],
		\end{equation}
		where the pgf of $Z_{i_j,j}$ is 
		\begin{equation*}
			\mathcal{P}_{Z_{i_j,j}}(s)
			=
			E[s^{Z_{i_j,j}}]
			=
			\sum_{k=0}^{n_j} f_{Z_{i_j,j}}(k) s^k,
			\quad 
			s \in [-1,1],
		\end{equation*}
		for $i_j \in \{0,1\}$ and $j \in \{1,\dots,d\}$.
		One uses the Fast Fourier Transform (\texttt{FFT}) algorithm of \cite{cooley1965algorithm} to extract the values of the pmf of $S^{(\XX)}$ from its pgf in $\eqref{eq:PfgSdiscrete}$. Details about that efficient approach is explained in Chapter 30 of \cite{cormen2009introductiona}. See also \cite{embrechts1993applications} for FFT applications in actuarial science and quantitative risk management.
		This procedure is illustrated in Example \ref{ex:final}, within Section \ref{NumILL}.
		\hfill \qed
		
		
		
	\end{example}

	\begin{example} [Exponential margins]
		\label{ex:ExponMargins}
		Assume that $\XX\in \mathcal{G}^{\pp}_d(F_1,\ldots,F_d)$, where $F_j$ is the cdf of an exponential distribution with mean $\frac{1}{\lambda_j}$, $j \in \{1,\ldots, d\}$. 
		In \cite{blier2024new}, they present an alternative stochastic representation of $\XX$ equivalent to \eqref{stoch_repre_X} that allows finding an analytical expression for the distribution of the sum of the components of the random vector $\XX$. 
		This other stochastic representation is
		\begin{equation}\label{stoch_repre_W}
			\XX=\WW_{1}+\II \WW_{2},
		\end{equation}
		where the components $W_{1,j}$ within $\WW_1$ and $W_{2,j}$ within $\WW_2$ are independent, with $W_{1,j}$ following an exponential distribution with mean $\frac{1-p_j}{\lambda_j}$ and $W_{2,j}$ also following an exponential distribution but with mean $\frac{1}{\lambda_j}$, for $j \in \{1,\ldots, d\}$. 
		The random vectors $\II$, $\WW_1$ and $\WW_2$ are independent.

		From \eqref{stoch_repre_W}, Pearson's coefficient becomes
		\begin{equation} \label{eq:PearsonRhoExpMargin}
			\rho_P(X_{j_1},X_{j_2}) 
			= 
			\frac{Cov(I_{j_1},I_{j_2})E[W_{2,j_1}]E[W_{2,j_2}]}{\sqrt{Var(X_{j_1})Var(X_{j_2})}} =
			Cov(I_{j_1},I_{j_2})
		\end{equation}
		since $E[W_{2,j_1}] = \frac{1}{\lambda_{j_1}}$, $E[W_{2,j_2}] = \frac{1}{\lambda_{j_2}}$, $Var(X_{j_1}) = \frac{1}{\lambda_{j_1}^2}$, and $Var(X_{j_2}) = \frac{1}{\lambda_{j_2}^2}$, $1 \leq j_1 < j_2 \leq d$.
		
		Considering now the sum $S^{(\XX)}$ of the components of $\XX$, the representation in \eqref{stoch_repre_W} allows us to express it as follows:
		\begin{equation}\label{eq:RepresGFGMSdv}
			S^{(\XX)} \overset{\mathcal{L}}{=} 
			W_{1,1} + I_1 W_{2,1}
			+ \ldots +
			W_{1,d} + I_d W_{2,d} = 
			\sum_{j=1}^d W_{1,j} + \sum_{j=1}^d I_j W_{2,j}.
		\end{equation}
		To identify the distribution of $S^{(\XX)}$, we find from \eqref{eq:RepresGFGMSdv}, the expression of the Laplace-Stieltjes transform (LST) of $S^{(\XX)}$, denoted by $\mathcal{L}_{S^{(\XX)}}$ and given by
		\begin{equation} \label{eq:GFGMLSTofSd}
			\mathcal{L}_{S^{(\XX)}}(t)
			= 
			\prod_{j=1}^d \mathcal{L}_{W_{1,j}}(t)
			\left(
			\sum_{\ii \in \{0,1\}^d} f_{\II}(\ii) 
			\prod_{j=1}^d \left( \mathcal{L}_{W_{2,j}}(t) \right)^{i_j}
			\right)
			, 
			\quad t \geq 0.  
		\end{equation}
		Using techniques explained in \cite{willmot2007class} and \cite{cossette2013multivariate}, the LST of $S$ admits the representation given by
		\begin{equation*} \label{eq:GFGMLSTofSdNo2}
			\mathcal{L}_{S^{(\XX)}}(t) 
			=
			\sum_{k=0}^{\infty} \eta_k \left( \frac{\beta}{\beta+t} \right)^{d+k}
			, 
			\quad t \geq 0    
		\end{equation*}
		where $\beta = \max \left(\lambda_1, \dots, \lambda_d, \frac{\lambda_1}{1-p_1}, \dots, \frac{\lambda_d}{1-p_d} \right)$, $\eta_k \geq 0$ for $k \geq 0$, 
		and $\sum_{k=0}^{\infty}\eta_k = 1$.
		From the expression of its LST in \eqref{eq:GFGMLSTofSd}, it follows that the rv $S$ follows a mixed Erlang distribution.   
		An application of \eqref{eq:GFGMLSTofSd} is provided in Example \ref{mainex}. 
		Details about the computation of the sequence of probabilities $\{\eta_k, k \in \mathbb{N}_0\}$ are explained in \cite{willmot2007class} and \cite{cossette2013multivariate}.   
		\hfill \qed
	\end{example}
	
	In Example \ref{ex:ExponMargins}, we find the distribution of $S$, which comes with an analytical expression for $F_S$ because we assume that the margins are exponential. In most cases, if the margins do not belong to a class of distributions closed under convolution, one must resort to numerical approximations. One of them is to use discretization techniques as explained in Section 4.3 of \cite{blier2023risk} jointly with the method explained in Example \ref{ex:DiscreteMargins}. 
	
	\section{Main result: Minimal convex sums under GFGM dependence} \label{FGMcop}

	Based on the results recalled and highlighted  in Section~\ref{prelim}, we provide in this section the geometrical structure embedded within the class of sums of components of random vectors whose joint distribution belongs to the class of $d$-variate GFGM copulas with a common parameter $p$, that is the class $\mathcal{C}_d^{p}$.
	Then, we are able to present a new result about convex order, as stated in Theorem~\ref{thm_sum_U}.
	
	Let us begin with the following proposition, which is a direct consequence of the one-to-one relationship given in \eqref{iff}.
	
	\begin{proposition} \label{prop_corresp_exch}
		Let $\II \in \mathcal{B}_d(p)$ and let $S^{(\II)}=\sum_{j=1}^d I_j$. Then, there exists an exchangeable Bernoulli random vector $\II^e \in \mathcal{E}_d(p)$ such that $S^{(\II^e)}=\sum_{j=1}^d I^e_j \in \mathcal{D}_d(dp)$ has the same distribution as $S^{(\II)}$.
	\end{proposition}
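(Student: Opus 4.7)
The plan is to observe that the proposition follows almost immediately from the correspondence \eqref{iff} once we verify that the distribution of $S^{(\II)}$ actually lives in $\mathcal{D}_d(dp)$. So the key preliminary step is to check the two defining properties of $\mathcal{D}_d(dp)$: support on $\{0,1,\ldots,d\}$ and mean equal to $dp$.

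First I would note that since $\II \in \mathcal{B}_d(p)$ has Bernoulli marginals with common mean $p$, the sum $S^{(\II)} = I_1 + \cdots + I_d$ is a discrete random variable supported on $\{0,1,\ldots,d\}$ with expectation
\begin{equation*}
E[S^{(\II)}] = \sum_{j=1}^d E[I_j] = dp.
\end{equation*}
Hence the pmf of $S^{(\II)}$ belongs to $\mathcal{D}_d(dp)$.

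Next, I would invoke the one-to-one relationship recalled in \eqref{iff}: the authors of \cite{fontana2021model} show that for every distribution in $\mathcal{D}_d(dp)$ there exists a unique exchangeable Bernoulli vector $\II^e \in \mathcal{E}_d(p)$ whose component sum $S^{(\II^e)}$ has exactly that distribution. Applying this to $F_{S^{(\II)}} \in \mathcal{D}_d(dp)$ produces an exchangeable Bernoulli vector $\II^e \in \mathcal{E}_d(p) \subseteq \mathcal{B}_d(p)$ such that $S^{(\II^e)} \overset{\mathcal{L}}{=} S^{(\II)}$, which is the claim.

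There is essentially no obstacle here: the entire content of the proposition is repackaging \eqref{iff}, and the only thing to verify is that the distribution of $S^{(\II)}$ qualifies as an element of $\mathcal{D}_d(dp)$, which is immediate from linearity of expectation and the Bernoulli marginals. The uniqueness of $\II^e$ is a bonus already provided by \eqref{iff}, although it is not explicitly needed for the existence statement asserted in the proposition.
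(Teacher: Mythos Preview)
Your proposal is correct and follows essentially the same approach as the paper: verify that $S^{(\II)}$ has support in $\{0,\ldots,d\}$ and mean $dp$ so that its pmf lies in $\mathcal{D}_d(dp)$, then apply the correspondence \eqref{iff} to obtain the exchangeable vector $\II^e$. Your remark about uniqueness is also consistent with what the paper records, though, as you note, only existence is needed here.
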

	\begin{proof}
		Given $\II \in \mathcal{B}_d(p)$, the sum of the components $S^{(\II)}=\sum_{j=1}^d I_j$ is a random variable that takes values in the set $\{0,1,\ldots,d\}$ and whose mean is $\mathrm{E}[S^{(\II)}]=dp$, that is $S^{(\II)} \in \mathcal{D}_d(dp)$. Therefore, from \eqref{iff}, it follows that there exists one exchangeable Bernoulli random vector $\II^e \in \mathcal{E}_d(p)$ such that $S^{(\II^e)} = \sum_{j=1}^d I^e_j \overset{\mathcal{L}}{=} S^{(\II)}$.
	\end{proof}

	Let $\mathcal{S}_d^{p}$ denote the class of sums of components of random vectors with multivariate distributions in $\mathcal{C}_d^p$.
	Let us indicate with $S^{(\UU, \II)}=\sum_{j=1}^d U^{(\II)}_j$, where $\UU^{(\II)}$ is the random vector whose joint cdf is the copula $C$ associated to $\II\in \BBB_d(p)$, as defined in \eqref{eq:representation-u}. 
	In order to study the geometrical structure of $\mathcal{S}_d^{p}$, we need to investigate the correspondence between distributions belonging to this class and multivariate Bernoulli distributions of the class $\mathcal{B}_d(p)$.
	
	\begin{lemma} \label{lemma_sum_distrib}
		Let $\II,\II' \in \mathcal{B}_d(p)$ and let $S^{(\II)}=\sum_{j=1}^d I_j$ and $S^{(\II')}=\sum_{j=1}^d I'_j$. Let $C$ and $C'$ be the  GFGM(p) copulas corresponding to $\II$ and $\II'$, respectively. Finally, let $\UU$ and $\UU'$ be uniform random vectors with joint cdfs $C$ and $C'$, respectively. If $S^{(\II)} \overset{\mathcal{L}}{=} S^{(\II')}$, then $\sum_{j=1}^d U_j \overset{\mathcal{L}}{=} \sum_{j=1}^d U'_j$.
	\end{lemma}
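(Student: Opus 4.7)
The plan is to exploit the stochastic representation in \eqref{eq:representation-u} and show that, conditionally on $\II$, the distribution of $\sum_{j=1}^d U_j$ depends on $\II$ only through $S^{(\II)}=\sum_{j=1}^d I_j$. Once this is established, the conclusion is immediate: the unconditional law of $\sum_j U_j$ is a mixture, over $k \in \{0,1,\dots,d\}$, of fixed distributions with mixing weights $\Pr(S^{(\II)} = k)$, and likewise for $\UU'$ with weights $\Pr(S^{(\II')} = k)$; the assumption $S^{(\II)} \overset{\mathcal{L}}{=} S^{(\II')}$ forces the two mixtures to coincide.

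More concretely, I would first write, using \eqref{eq:representation-u}, $U_j \overset{\mathcal{L}}{=} U_{0,j}^{1-p} U_{1,j}^{I_j}$ with $\UU_0, \UU_1, \II$ independent and with the pairs $(U_{0,j}, U_{1,j})$ independent and identically distributed across $j$. Condition on $\{\II = \ii\}$ with $|\ii|=\sum_j i_j = k$. By independence of $(\UU_0, \UU_1)$ from $\II$, the conditional law of
\begin{equation*}
\sum_{j=1}^d U_{0,j}^{1-p} U_{1,j}^{i_j}
\end{equation*}
is a sum of $d$ independent terms: $d-k$ of them have the common law of $U_{0,1}^{1-p}$ (i.e.\ the Beta variable $V_{0,1}$ from Theorem \ref{thm_stoch_repre_X}) and $k$ of them have the common law of $U_{0,1}^{1-p} U_{1,1}$ (i.e.\ $V_{0,1}V_{1,1}$). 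Because the summands are exchangeable in $j$, this conditional distribution depends on $\ii$ only through $k = |\ii|$; call it $H_k$.

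Summing over $\ii$, the law of $\sum_j U_j$ is
\begin{equation*}
F_{\sum_j U_j}(y)
\;=\;
\sum_{k=0}^{d} \Pr\!\bigl(S^{(\II)} = k\bigr)\, H_k(y),
\quad y \in \RR,
\end{equation*}
and the same identity holds for $\UU'$ with $S^{(\II')}$ in place of $S^{(\II)}$ (the functions $H_k$ do not depend on the pmf of $\II$). If $S^{(\II)} \overset{\mathcal{L}}{=} S^{(\II')}$, the mixing weights coincide term by term, so $\sum_j U_j \overset{\mathcal{L}}{=} \sum_j U'_j$, as required.

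The main obstacle is minor: it is essentially a careful bookkeeping step to justify that the conditional law of the sum depends only on $k=|\ii|$, which rests on the i.i.d.\ structure of the pairs $(U_{0,j}, U_{1,j})$ across $j$ together with independence from $\II$. Nothing beyond the already-stated stochastic representation and the one-to-one correspondence \eqref{iff} (implicitly used in Proposition \ref{prop_corresp_exch}) is needed.
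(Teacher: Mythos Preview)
Your proof is correct and follows essentially the same approach as the paper: condition on $\II=\ii$, use the i.i.d.\ structure of the pairs $(U_{0,j},U_{1,j})$ to see that the conditional law of the sum depends on $\ii$ only through $|\ii|$, and conclude via the resulting mixture representation over $k=0,\dots,d$. One small remark: the reference to \eqref{iff} and Proposition~\ref{prop_corresp_exch} is unnecessary here, as neither is used in this argument (nor in the paper's proof of the lemma).
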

	\begin{proof}
		Let $F$ and $F'$ be the cdfs of $\sum_{j=1}^d U_j$ and $\sum_{j=1}^d U'_j$, respectively.
		From \eqref{eq:representation-u}, we have the following stochastic representation:
		\begin{equation*}\label{FGM_stoc}
			U_j = U_{0,j}^{1-p} U_{1,j}^{I_j}, \quad j \in \{1,\ldots,d\},
		\end{equation*}
		where $\UU_0 = (U_{0,1},\ldots,U_{0,d})$ and $\UU_1 = (U_{1,1},\ldots,U_{1,d})$ are vectors of independent standard uniform random variables and $\UU_0$, $\UU_1$, and $\II=(I_{1},\ldots,I_{d})$  are independent. 
		Then, we have
		\begin{align}
			F(x) &= \Pr \bigg(\sum_{j=1}^d U_j \leq x \bigg) = \Pr \bigg(\sum_{j=1}^dU_{0,j}^{1-p}U_{1,j}^{I_j} \leq x \bigg)\notag\\
			& = \sum_{\ii \in \{0,1\}^d} \Pr \bigg(\sum_{j=1}^d U_{0,j}^{1-p}U_{1,j}^{i_j} \leq x \bigg) f_{\II}(\ii)  \notag\\
			&= \sum_{\ii \in \{0,1\}^d} \Pr \bigg(\sum_{j:i_j=1} U_{0,j}^{1-p}U_{1,j} + \sum_{j:i_j=0} U_{0,j}^{1-p} \leq x \bigg) f_{\II}(\ii),
			\quad  {x\in [0,d] }.\label{eq:F}
		\end{align}
		Since $\UU_{0}$ and $\UU_{1}$ are independent vectors of iid random variables, the distribution of the sum $\sum_{j:i_j=1} U_{0,j}^{1-p}U_{1,j} + \sum_{j:i_j=0} U_{0,j}^{1-p}$ does not depend on the position of the 1's in $\ii$, but only on the number of 1's, that is, on the sum of the components $\sum_{j=1}^d i_j$.
		Hence, the cdf in \eqref{eq:F} becomes
		\begin{equation}\label{eq:F2}
			\begin{split}
				F(x) = \sum_{k=0}^d \sum_{\ii \in \mathcal{X}_d^k} \Pr \bigg(\sum_{j=1}^k U_{0,j}^{1-p}U_{1,j} + \sum_{j=k+1}^d U_{0,j}^{1-p} \leq x \bigg)f_{\II}(\ii),
			\end{split}
		\end{equation}
		where $\mathcal{X}_d^k = \{\ii \in \{0,1\}^d: \sum_{j=1}^d i_j = k \}$ and we set $\sum_{j=1}^0 U_{0,j}^{1-p}U_{1,j}:=0$ and $\sum_{j=d+1}^d U_{0,j}^{1-p}:=0$.
		It follows that
		\begin{equation} \label{distrF}
			\begin{split}
				F(x) = \sum_{k=0}^d \Pr \bigg(\sum_{j=1}^k U_{0,j}^{1-p}U_{1,j} + \sum_{j=k+1}^d U_{0,j}^{1-p} \leq x \bigg) f_{S^{(\II)}}(k),
				\quad x \in [0,d].
			\end{split}
		\end{equation}
		Therefore, given that the first probability in the summation of \eqref{distrF} does not depend on $\ii$ for $k\in \{0,\ldots,d\}$, we conclude that $S^{(\II)} \overset{\mathcal{L}}{=} S^{(\II')}$ implies $F(x) = F'(x)$, for every $x \in [0,d]$.
	\end{proof}
	We can now prove the following theorem that characterizes $\mathcal{S}_d^{p}$.
	
	\begin{theorem}\label{geomSumCop}
		The class $\mathcal{S}_d^p$ is a convex polytope and its extremal points are the cdfs $F_{S^{(\UU, \EE_k)}}$ of the random variables $S^{(\UU, \EE_k)}=\sum_{j=1}^dU^{(\EE_k)}_j$, where 
		$\UU^{(\EE_k)} \overset{\mathcal{L}}{=} \UU_0 ^{{1} - {p}} \UU_1^{\boldsymbol{(\EE_k)}}$,
		and $\EE_k$ is an extremal point of $\mathcal{E}_d(p)$, for $k \in \{1,\ldots, n_p^{\mathcal{D}}\}$.
	\end{theorem}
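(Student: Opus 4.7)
The strategy is to establish a bijective affine correspondence between the convex polytope $\mathcal{D}_d(dp)$ and the class $\mathcal{S}_d^p$, and then transfer the polytope structure directly. The starting point is the identity already derived inside the proof of Lemma \ref{lemma_sum_distrib}, namely that for every $\UU \in \mathcal{C}_d^p$ with associated Bernoulli vector $\II \in \mathcal{B}_d(p)$ one has $F_{S^{(\UU, \II)}}(x) = \sum_{k=0}^d G_k(x)\, f_{S^{(\II)}}(k)$, where $G_k(x) := \Pr\big(\sum_{j=1}^k U_{0,j}^{1-p} U_{1,j} + \sum_{j=k+1}^d U_{0,j}^{1-p} \leq x\big)$ depends only on $k$ and not on $\II$. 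This defines a linear map $T$ sending pmfs on $\{0,\ldots,d\}$ to cdfs, and the plan is to prove that $T$ restricts to a bijection from $\mathcal{D}_d(dp)$ onto $\mathcal{S}_d^p$.

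Surjectivity of $T$ onto $\mathcal{S}_d^p$ is essentially immediate from the setup: any $F_{S^{(\UU,\II)}} \in \mathcal{S}_d^p$ equals $T(f_{S^{(\II)}})$ with $f_{S^{(\II)}} \in \mathcal{D}_d(dp)$, while conversely, by Proposition \ref{prop_corresp_exch}, every $f_D \in \mathcal{D}_d(dp)$ is realized as the sum distribution of some exchangeable $\II^e \in \mathcal{E}_d(p) \subset \mathcal{B}_d(p)$, so that $T(f_D) = F_{S^{(\UU, \II^e)}} \in \mathcal{S}_d^p$ by Lemma \ref{lemma_sum_distrib}.

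The main obstacle is the injectivity of $T$, which reduces to linear independence of $G_0,\ldots, G_d$ on $[0,d]$. My plan here is to pass to the Laplace-Stieltjes transforms, obtaining $\widehat{G}_k(t) = A(t)^k B(t)^{d-k}$ with $A(t) = E[e^{-t U_0^{1-p} U_1}]$ and $B(t) = E[e^{-t U_0^{1-p}}]$. A first-order expansion at $t=0$ gives $A'(0) = -E[U_0^{1-p}U_1] = -\tfrac{1}{2}E[U_0^{1-p}] \neq B'(0) = -E[U_0^{1-p}]$, while $A(0) = B(0) = 1$, so $z(t) := A(t)/B(t)$ is non-constant in any neighborhood of the origin. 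Any relation $\sum_k c_k \widehat{G}_k(t) \equiv 0$ therefore yields a polynomial $\sum_k c_k z^k$ vanishing at infinitely many values of $z$, forcing $c_k = 0$ for every $k$.

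Once $T$ is known to be a bijective affine map from $\mathcal{D}_d(dp)$ onto $\mathcal{S}_d^p$, the convex polytope structure of $\mathcal{D}_d(dp)$ transfers to $\mathcal{S}_d^p$, and the extremal points of $\mathcal{S}_d^p$ are exactly the images $T(\rr_{D,k})$ of the extremal points $\rr_{D,k}$ of $\mathcal{D}_d(dp)$, $k \in \{1,\ldots, n_p^{\mathcal{D}}\}$. By the one-to-one correspondence \eqref{iff}, each $\rr_{D,k}$ is the pmf of the sum of the components of the extremal exchangeable Bernoulli vector $\EE_k \in \mathcal{E}_d(p)$, and substituting this pmf into the defining identity of $T$ produces exactly $F_{S^{(\UU, \EE_k)}}$, as required.
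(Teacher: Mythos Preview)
Your argument is correct and rests on the same core identity as the paper's proof, namely the formula (from the proof of Lemma~\ref{lemma_sum_distrib}) expressing $F_{S^{(\UU,\II)}}$ as $\sum_{k=0}^d G_k\, f_{S^{(\II)}}(k)$. The paper then passes through $\mathcal{E}_d(p)$: given $S\in\mathcal{S}_d^p$, it replaces $\II$ by the unique exchangeable $\II^e$ with the same sum law, writes $f_{\II^e}=\sum_k\lambda_k e_k$, and concludes $F_S=\sum_k\lambda_k F_{S^{(\UU,\EE_k)}}$. This shows that $\mathcal{S}_d^p$ is the convex hull of the $F_{S^{(\UU,\EE_k)}}$, hence a polytope, but it stops there and does not check that these cdfs are genuinely the extreme points (i.e., that they are distinct and that none is a convex combination of the others).

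Your route differs in that you work directly with $\mathcal{D}_d(dp)$ via the affine map $T$ and, crucially, you supply the injectivity argument through the Laplace transforms $\widehat G_k(t)=A(t)^kB(t)^{d-k}$ and the non-constancy of $A/B$ near $0$. This extra step is exactly what is needed to upgrade ``generating set'' to ``set of extremal points'': since $T$ is an affine bijection, it carries the vertices $\rr_{D,k}$ of $\mathcal{D}_d(dp)$ to the vertices of $\mathcal{S}_d^p$, and the one-to-one correspondence \eqref{iff} then identifies these images with the $F_{S^{(\UU,\EE_k)}}$. So your proof follows the paper's line but is in fact more complete on the extremality claim.
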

	
	
		
		
		\begin{proof}
			Consider any  $S \in \mathcal{S}_d^p$, then there exists $\II \in \BBB_d(p)$ such that  $S=\sum_{i=1}^dU^{(\II)}_i$. 
			By Proposition~\ref{prop_corresp_exch}, there exists a unique $\II^e\in \EEE(p)$ with $\sum_{j=1}^d I^e_j \overset{\mathcal{L}}{=} \sum_{j=1}^dI_j$. 
			Thus by Lemma~\ref{lemma_sum_distrib}, $S \overset{\mathcal{L}}{=} \sum_{j=1}^d U_j^e$, where $\UU^e\in \mathcal{C}_d^p$ is the uniform random vector with copula corresponding to $\II^e$. In other words, $\UU^e$ admits the representation
			$\UU^e \overset{\mathcal{L}}{=} \UU_0 ^{{1} - {p}} \UU_1^{\II^e}$,
			where $\II^e\in \EEE(p)$. 
			Let  $F_S$ be the cdf of $S$. Then, we have
			\begin{equation*}
				\begin{split}
					F_S(x) &= \Pr \bigg(\sum_{j=1}^d U^e_j \leq x \bigg) = \Pr \bigg(\sum_{j=1}^dU_{0,j}^{1-p}U_{1,j}^{I^e_j} \leq x \bigg) = \sum_{\ii \in \{0,1\}^d} \Pr \bigg(\sum_{j=1}^d U_{0,j}^{1-p}U_{1,j}^{i_j} \leq x \bigg) f_{\II^e}(\ii) \\
					&= \sum_{k=1}^{n_p^{\mathcal{D}}} \lambda_k \sum_{\ii \in \{0,1\}^d} \Pr \bigg(\sum_{j=1}^d U_{0,j}^{1-p}U_{1,j}^{i_j} \leq x \bigg) f_{\EE_k}( \ii) 
					= \sum_{k=1}^{n_p^{\mathcal{D}}} \lambda_k \Pr \bigg(\sum_{j=1}^d U^{(\EE_k)}_j \leq x \bigg) \\
					&= \sum_{k=1}^{n_p^{\mathcal{D}}} \lambda_k F_{S^{(\UU, \EE_k )}}(x),
				\end{split}
			\end{equation*}
			where $F_{S^{(\UU, \EE_k)}}$ is the cdf of $S^{(\UU, \EE_k)}=\sum_{j=1}^d U^{(\EE_k)}_j$.
		\end{proof}
		
		From Theorem \ref{geomSumCop}, we can complete the 
		relationship between classes in \eqref{iff} as follows:
		\begin{equation*}\label{iff2}
			\mathcal{S}_d^p \leftrightarrow 	\mathcal{E}_d(p) \leftrightarrow \mathcal{D}_d(dp).
		\end{equation*}
		We now study the convex order of sums of the components of random vectors with joint distribution described by a GFGM$(p)$ copula.
		We first recall the definition of the convex order.
		\begin{definition}
			Given two random variables $X$ and $Y$ with finite means, $X$ is said to be smaller
			than $Y$ in the convex order (denoted $X \preceq_{cx}Y$) if
			$
			E[\phi(X)]\leq E[\phi(Y)],
			$
			for all real-valued convex functions $\phi$ for which the expectations exist.
		\end{definition}

		In the proof of the following Theorem \ref{thm_sum_U}, we need to recourse to the supermodular order that we recall below (see Definition 3.8.5 in \cite{kleiber2005muller}).
		A function $\varphi: \RR^d \to \RR$ is said to be supermodular if $\varphi(\xx)+\varphi(\yy) \leq \varphi(\xx\vee\yy)+\varphi(\xx\wedge\yy)$,
		where the operators  $\wedge$ and $\vee$  denote coordinatewise minimum and maximum respectively.
		\begin{definition}
			We say that $\UU$ is smaller than $\UU'$ under the supermodular order, denoted $\UU \preceq_{sm} \UU'$, if $E[\varphi(\UU)] \leq E[\varphi(\UU')] $ for all supermodular functions $\varphi$, given that the expectations exist. 
		\end{definition}
		\begin{theorem}\label{thm_sum_U}
			Let $\II,\II' \in \BBB_d(p)$ be such that $\sum_{j=1}^d I_j \preceq_{cx} \sum_{j=1}^d I'_j$. Let $C$ and $C'$ be the GFGM copulas associated to $\II$ and $\II'$ and let $\UU$ and $\UU'$ be uniform random vectors with joint cdf $C$ and $C'$, respectively. Then,
			$
			\sum_{j=1}^d U_j \preceq_{cx} \sum_{j=1}^d U'_j.
			$
		\end{theorem}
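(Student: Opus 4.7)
The plan is to use Lemma \ref{lemma_sum_distrib} to reduce the convex-order comparison of $\sum_j U_j$ and $\sum_j U'_j$ to a discrete convexity statement about a scalar function on $\{0,1,\ldots,d\}$, and then invoke the hypothesis $S^{(\II)} \preceq_{cx} S^{(\II')}$. From formula \eqref{distrF} in the proof of Lemma \ref{lemma_sum_distrib}, for any $\II \in \BBB_d(p)$ and any convex $\phi$ such that the relevant expectations exist,
\[
E\Bigl[\phi\Bigl(\sum_{j=1}^d U_j\Bigr)\Bigr] = \sum_{k=0}^d f_{S^{(\II)}}(k)\, E[\phi(T_k)] = E[g(S^{(\II)})],
\]
where $T_k := \sum_{j=1}^k U_{0,j}^{1-p}U_{1,j} + \sum_{j=k+1}^d U_{0,j}^{1-p}$ and $g(k) := E[\phi(T_k)]$. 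The law of $T_k$, and hence the function $g$, is independent of $\II$. The identical identity holds for $\II'$, so the hypothesis forces $\sum_j U_j \preceq_{cx} \sum_j U'_j$ as soon as $g$ is shown to be discretely convex, i.e.\ $g(k-1)+g(k+1) \geq 2 g(k)$ for $1 \leq k \leq d-1$.

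To establish this, I couple all the $T_k$'s on a common probability space by introducing $V_j := U_{0,j}^{1-p}$, $W_j := U_{1,j}$, and the non-negative variables $\Delta_j := V_j(1-W_j)$, all mutually independent across $j$. A short algebraic rearrangement gives
\[
T_k = S^\star + \sum_{j=k+1}^d \Delta_j, \qquad S^\star := \sum_{j=1}^d V_j W_j,
\]
where the $\Delta_j$ are iid and independent of $S^\star$. Setting $m := d-k$ and $h(m) := g(d-m)$ turns $h$ into the expectation of $\phi$ along a random walk with non-negative iid increments started from $S^\star$, and the discrete convexity of $g$ is equivalent to $h(m-1)+h(m+1) \geq 2 h(m)$ for $1 \leq m \leq d-1$.

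This is where the supermodular order recalled above the statement enters the argument. Fixing $Y := S^\star + \sum_{j=d-m+2}^d \Delta_j$ and letting $\Delta, \Delta'$ be two iid copies of $\Delta_1$ independent of $Y$, the coupling yields $h(m-1) = E[\phi(Y)]$, $h(m) = E[\phi(Y+\Delta)] = E[\phi(Y+\Delta')]$, and $h(m+1) = E[\phi(Y+\Delta+\Delta')]$, so the target rewrites as
\[
E\bigl[\phi(Y) + \phi(Y+\Delta+\Delta') - \phi(Y+\Delta) - \phi(Y+\Delta')\bigr] \geq 0.
\]
For any fixed $y \in \RR$ and $a, b \geq 0$, the two-variable map $(a,b) \mapsto \phi(y+a+b)$ is supermodular whenever $\phi$ is convex, a direct consequence of $\phi$ having non-decreasing differences; applying the supermodular inequality to the incomparable points $(a, 0)$ and $(0, b)$, with join $(a, b)$ and meet $(0, 0)$, gives $\phi(y) + \phi(y+a+b) \geq \phi(y+a) + \phi(y+b)$. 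Integrating against the joint law of $(Y, \Delta, \Delta')$ yields the desired inequality.

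The main obstacle is to spot the coupling that exposes the underlying random-walk structure of $T_k$; once in place, the discrete convexity of $g$ reduces to the supermodularity of $(a,b) \mapsto \phi(y+a+b)$ for convex $\phi$, which is elementary. Everything else is bookkeeping on top of Lemma \ref{lemma_sum_distrib}, and no appeal to the extremal decomposition of $\EEE(p)$ or of $\BBB_d(p)$ is needed.
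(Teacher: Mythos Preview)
Your approach is genuinely different from the paper's and, once a slip is repaired, yields a self-contained proof. The paper argues indirectly: it replaces $\II,\II'$ by exchangeable vectors $\II^e,\II^{e'}$ with the same sum distributions (Proposition~\ref{prop_corresp_exch}), invokes a result of Frostig to upgrade $\sum I^e_j \preceq_{cx} \sum I^{e'}_j$ to $\II^e \preceq_{sm} \II^{e'}$, and then applies an external theorem (Theorem~4.2 of \cite{blier2022stochastic}) to transfer the supermodular order to $\UU^e,\UU^{e'}$. Your route---writing $E[\phi(\sum_j U_j)]=E[g(S^{(\II)})]$ and proving that $g$ is discretely convex---avoids both external ingredients and is more elementary.

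There is, however, a genuine gap in your coupling. You assert that the $\Delta_j=V_j(1-W_j)$ are independent of $S^\star=\sum_j V_jW_j$ and then take $\Delta,\Delta'$ to be copies of $\Delta_1$ \emph{independent of $Y$}. This is false: $S^\star$ contains $V_jW_j$, which shares $(V_j,W_j)$ with $\Delta_j$, so $\Delta_j$ and $S^\star$ (hence $Y$) are dependent. With genuinely independent $\Delta,\Delta'$ the identities $h(m)=E[\phi(Y+\Delta)]$ and $h(m+1)=E[\phi(Y+\Delta+\Delta')]$ fail; for instance with $d=2$, $m=1$ one has $Y+\Delta_2=V_1W_1+V_2$, whereas $Y+\Delta=V_1W_1+V_2W_2+\Delta$ with $\Delta$ independent, and these do not have the same law. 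The ``random walk started from $S^\star$'' picture is therefore incorrect.

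The fix is immediate and preserves your argument: set $\Delta:=\Delta_{d-m+1}$ and $\Delta':=\Delta_{d-m}$, the \emph{actual} increments in your decomposition (dependent on $Y$). Then $h(m-1)=E[\phi(Y)]$, $h(m)=E[\phi(Y+\Delta)]$ and $h(m+1)=E[\phi(Y+\Delta+\Delta')]$ hold by direct substitution, and the remaining identity $h(m)=E[\phi(Y+\Delta')]$ follows because swapping the indices $d-m$ and $d-m+1$ leaves $Y$ invariant (both $S^\star$ and $\sum_{j\geq d-m+2}\Delta_j$ are unaffected) while interchanging $\Delta$ and $\Delta'$, so $(Y,\Delta)\overset{\mathcal{L}}{=}(Y,\Delta')$. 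Your pointwise inequality $\phi(y)+\phi(y+a+b)\geq\phi(y+a)+\phi(y+b)$ for $a,b\geq 0$ then delivers the discrete convexity of $g$ without any independence assumption, and the proof is complete.
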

		\begin{proof}
			By Proposition~\ref{prop_corresp_exch}, there exist two exchangeable Bernoulli random vectors $\II^e$ and $\II^{e'}$ of the class $\mathcal{E}_d(p)$ such that $\sum_{j=1}^d I_j \overset{\mathcal{L}}{=} \sum_{j=1}^d I^e_j$ and $\sum_{j=1}^d I'_j \overset{\mathcal{L}}{=} \sum_{j=1}^d I^{e'}_j$. Therefore, by Lemma~\ref{lemma_sum_distrib}, we have
			\begin{equation} \label{equality_distrib}
				\sum_{j=1}^d U_j \overset{\mathcal{L}}{=} \sum_{j=1}^d U^e_j \quad\text{and}\quad \sum_{j=1}^d U'_j \overset{\mathcal{L}}{=} \sum_{j=1}^d U^{e'}_j,
			\end{equation}
			where $\UU^e$ and $\UU^{e'}$ are uniform random vectors with joint distributions given by the GFGM copulas associated to $\II^e$ and $\II^{e'}$, respectively.
			Moreover, since $\sum_{j=1}^d I_j \preceq_{cx} \sum_{j=1}^d I'_j$ by hypothesis, then $\sum_{j=1}^d I^e_j \preceq_{cx} \sum_{j=1}^d I^{e'}_j$. However, as a consequence of results in Section 3 of \cite{frostig2001comparison}, the following double implication holds:
			\begin{equation*}
				\sum_{j=1}^d I^e_j \preceq_{cx} \sum_{j=1}^d I^{e'}_j \iff \II^e \preceq_{sm} \II^{e'}.
			\end{equation*}
			Furthermore, by Theorem~4.2 of \cite{blier2022stochastic}, $\II^e \preceq_{sm} \II^{e'}$ implies $\UU^e \preceq_{sm} \UU^{e'}$. Since, given a convex function $\phi \colon \mathbb{R} \to \mathbb{R}$, the function $\psi(\xx)=\phi(x_1+\ldots+x_d)$ is supermodular, $\II^e \preceq_{sm} \II^{e'}$ implies, in particular, $\sum_{j=1}^d U^e_j \preceq_{cx} \sum_{j=1}^d U^{e'}_j$. Finally, by the equality in distribution in \eqref{equality_distrib}, we have $\sum_{j=1}^d U_j \preceq_{cx} \sum_{j=1}^d U'_j$.
		\end{proof}
		
		Theorem \ref{thm_sum_U} obviously holds for FGM copulas by setting $p=\tfrac{1}{2}$. 

		\subsection{Class $\GGG_d^p(F)$ of distributions}
		
		Let us introduce the class $\GGG_d^p(F)$ of joint cdfs with a copula in the class $\mathcal{C}_d^p$  and with the same marginal cdfs $F$. Consider $\XX \in \GGG_d^p(F)$. From  Theorem \ref{thm_stoch_repre_X}, there exists $\II \in \BBB_d(p)$ such that $\XX$ is built from the Bernoulli random vector $\II$, according to the stochastic representation \eqref{stoch_repre_X}. 
		Using this representation, we can generalize Lemma \ref{lemma_sum_distrib} as follows.

		\begin{lemma} \label{lemma_sum_distrib2}
			Let $\II,\II' \in \BBB_d(p)$ be such that $S^{(\II)}=\sum_{j=1}^d I_j$ and $S^{(\II')}=\sum_{j=1}^d I'_j$ are equal in distribution. 
			Let $\XX, \XX'\in \GGG_d^p(F)$ be the random vectors corresponding to $\II$ and $\II'$, respectively. 
			Then, $\sum_{j=1}^d X_j$ and $\sum_{j=1}^d X'_j$ have the same distribution.
		\end{lemma}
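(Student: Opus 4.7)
The strategy is to transcribe the proof of Lemma \ref{lemma_sum_distrib} using the stochastic representation of Theorem \ref{thm_stoch_repre_X} in place of the uniform-margin representation \eqref{eq:representation-u}. By that theorem, we may write $X_j = (1 - I_j) Z_{0,j} + I_j Z_{1,j}$ and, for an independent copy of $(\ZZ_0, \ZZ_1)$ attached to $\II'$, $X'_j = (1 - I'_j) Z_{0,j} + I'_j Z_{1,j}$, where $\ZZ_0, \ZZ_1$ are independent of $\II$ (resp. $\II'$) and have components built from the underlying $V_{0,j}$ and $V_{1,j}$. The crucial simplification compared to Theorem \ref{thm_stoch_repre_X} is that here all margins coincide with the common $F$, so the $Z_{0,j}$ are iid and the $Z_{1,j}$ are iid.

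First, I would condition on $\II$ and use its independence from $(\ZZ_0, \ZZ_1)$ to obtain
\[
F_{S^{(\XX)}}(x) = \sum_{\ii \in \{0,1\}^d} f_{\II}(\ii)\, \Pr\!\left(\sum_{j:\, i_j = 0} Z_{0,j} + \sum_{j:\, i_j = 1} Z_{1,j} \leq x\right), \qquad x \in \RR.
\]
Second, I would argue that for fixed $\ii$ the summands on the right are mutually independent: indices with $i_j = 0$ contribute $Z_{0,j}$, a deterministic function of $V_{0,j}$ alone, while indices with $i_j = 1$ contribute $Z_{1,j}$, a function of $V_{0,j}$ and $V_{1,j}$; since different indices $j$ involve disjoint sets of underlying $V$'s, the selected summands are independent. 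Combined with the fact that the $Z_{0, \cdot}$ are iid and the $Z_{1, \cdot}$ are iid, the inner distribution depends on $\ii$ only through the number of ones $k = \sum_j i_j$. Grouping the outer sum by $k$ yields
\[
F_{S^{(\XX)}}(x) = \sum_{k=0}^{d} f_{S^{(\II)}}(k)\, \Pr\!\left(\sum_{j=1}^{k} Z_{1,j} + \sum_{j=k+1}^{d} Z_{0,j} \leq x\right),
\]
with the conventions $\sum_{j=1}^{0} \cdot = \sum_{j=d+1}^{d} \cdot = 0$.

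Third, the identical formula holds for $\XX'$ with $f_{S^{(\II')}}$ in place of $f_{S^{(\II)}}$; the conditional probabilities on the right are the same in both cases, as they depend on the common margin $F$ and not on the Bernoulli law. Since $S^{(\II)} \overset{\mathcal{L}}{=} S^{(\II')}$ by hypothesis, every term matches and therefore $F_{S^{(\XX)}} = F_{S^{(\XX')}}$.

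The main obstacle is essentially cosmetic and lies in the second step: one must verify the mutual independence of the selected $Z$-components while remembering that, for the same index $j$, $Z_{0,j}$ and $Z_{1,j}$ share the auxiliary variable $V_{0,j}$. This potential dependence is harmless because exactly one of them enters the sum for each $j$. Once this observation is made, the argument is a direct generalization of Lemma \ref{lemma_sum_distrib} from the uniform case to any common marginal $F$.
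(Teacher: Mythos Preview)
Your proposal is correct and follows essentially the same route as the paper: the paper's proof is a one-line reference to Lemma~\ref{lemma_sum_distrib}, noting only that $\ZZ_0$ and $\ZZ_1$ have independent and identically distributed components in the common-margin case. Your more detailed write-up, including the care taken over the potential shared dependence through $V_{0,j}$, is sound; note however that in Theorem~\ref{thm_stoch_repre_X} the $Z_{h,j}$ are required to match $F^{-1}(V_{0,j})$ and $F^{-1}(V_{0,j}V_{1,j})$ only in distribution, so one may simply take all $Z_{h,j}$ mutually independent and the subtlety disappears.
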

		
		\begin{proof}
			Similarly to the proof of Lemma \ref{lemma_sum_distrib}, Lemma \ref{lemma_sum_distrib2} is proved observing that 
			$\ZZ_0$ and $\ZZ_1$ have independent and identically distributed components.
		\end{proof}
		
		Again, the stochastic representation in \eqref{stoch_repre_X} helps us to find the following generalization of Theorem \ref{geomSumCop}.
		
		\begin{theorem}\label{geomSumCop2}
			The class $\mathcal{S}_d^p(F)$ of distributions of sums of components of vectors with distribution in $\GGG_d^p(F)$ is a convex polytope and its extremal points are the distributions  of  $S^{(\XX, \EE_k)}=\sum_{j=1}^dX^{(\EE_k)}_j$ where 
			\begin{equation}\label{stoch_repre_X_extremal}
				\XX^{(\EE_k)}=(\boldsymbol{1}-\EE_k)\ZZ_0+\EE_k \ZZ_1,
			\end{equation}
			where $\EE_k$ is an extremal point of $\mathcal{E}_d(p)$, for $k \in \{1,\ldots, n_p^{\mathcal{D}}\}$.
		\end{theorem}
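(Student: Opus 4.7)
The plan is to mirror the proof of Theorem \ref{geomSumCop}, replacing the uniform representation \eqref{eq:representation-u} by the more general representation \eqref{stoch_repre_X} of Theorem \ref{thm_stoch_repre_X}. First, I would take an arbitrary element of $\mathcal{S}_d^p(F)$: by definition this is the distribution of $S = \sum_{j=1}^d X_j$ for some $\XX \in \GGG_d^p(F)$, and Theorem \ref{thm_stoch_repre_X} yields a Bernoulli vector $\II \in \BBB_d(p)$ and independent random vectors $\ZZ_0,\ZZ_1$ (independent of $\II$) with $\XX \overset{\mathcal{L}}{=} (\boldsymbol{1}-\II)\ZZ_0 + \II\ZZ_1$.

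The second step reduces to the exchangeable case. By Proposition \ref{prop_corresp_exch}, there exists $\II^e \in \mathcal{E}_d(p)$ with $S^{(\II^e)} \overset{\mathcal{L}}{=} S^{(\II)}$. Defining $\XX^e = (\boldsymbol{1}-\II^e)\ZZ_0 + \II^e \ZZ_1$, Lemma \ref{lemma_sum_distrib2} gives $\sum_j X_j \overset{\mathcal{L}}{=} \sum_j X_j^e$, so the cdf $F_S$ coincides with the cdf of $\sum_j X_j^e$. The third step uses the polytope structure of $\mathcal{E}_d(p)$: from Table \ref{extremPointgen} we can write $f_{\II^e} = \sum_{k=1}^{n_p^{\mathcal{D}}} \lambda_k \, e_k$ with $\lambda_k \geq 0$ summing to one, where $e_k$ is the pmf of the extremal point $\EE_k$. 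Conditioning on $\II^e$ and using the independence of $\ZZ_0,\ZZ_1$ from $\II^e$ yields
\begin{align*}
F_S(x)
&= \sum_{\ii \in \{0,1\}^d} f_{\II^e}(\ii) \, \Pr\!\Big( \sum_{j=1}^d (1-i_j)Z_{0,j} + i_j Z_{1,j} \leq x \Big) \\
&= \sum_{k=1}^{n_p^{\mathcal{D}}} \lambda_k \sum_{\ii \in \{0,1\}^d} e_k(\ii) \, \Pr\!\Big( \sum_{j=1}^d (1-i_j)Z_{0,j} + i_j Z_{1,j} \leq x \Big) \\
&= \sum_{k=1}^{n_p^{\mathcal{D}}} \lambda_k \, F_{S^{(\XX,\EE_k)}}(x),
\end{align*}
where $S^{(\XX,\EE_k)} = \sum_{j=1}^d X_j^{(\EE_k)}$ and $\XX^{(\EE_k)}$ is as in \eqref{stoch_repre_X_extremal}. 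Since $F_S$ is arbitrary, this shows $\mathcal{S}_d^p(F)$ is the convex hull of the $n_p^{\mathcal{D}}$ cdfs $F_{S^{(\XX,\EE_k)}}$, hence is a convex polytope whose generators are these cdfs.

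The last step is to argue that the generators are indeed extremal, i.e., none can be written as a nontrivial convex combination of the others. I would do this by pulling back to $\mathcal{D}_d(dp)$: the sum $S^{(\EE_k)} = \sum_j E_{k,j}$ has pmf equal to the extremal point $r_{D,k}$ of $\mathcal{D}_d(dp)$, and any convex identity $F_{S^{(\XX,\EE_k)}} = \sum_{\ell \neq k} \mu_\ell F_{S^{(\XX,\EE_\ell)}}$ would, by the same conditioning argument as above run in reverse, force a corresponding convex identity $r_{D,k} = \sum_{\ell \neq k} \mu_\ell r_{D,\ell}$, contradicting the extremality of $r_{D,k}$ in $\mathcal{D}_d(dp)$. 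I expect this extremality step to be the main obstacle, because it requires the injectivity direction of Lemma \ref{lemma_sum_distrib2}, namely that distinct sum distributions of the underlying Bernoulli vectors give distinct cdfs of $S^{(\XX,\EE_k)}$; this is plausible since formula \eqref{distrF} (adapted to the representation \eqref{stoch_repre_X}) writes $F_{S^{(\XX,\EE_k)}}$ as a mixture whose mixing weights are exactly $f_{S^{(\EE_k)}}(0),\ldots,f_{S^{(\EE_k)}}(d)$ with linearly independent mixands, but it still needs a careful verification.
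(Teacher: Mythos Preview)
Your proposal is correct and follows exactly the route the paper intends: the paper's own proof is simply ``similar to the one of Theorem \ref{geomSumCop}'', i.e., pass from $\II$ to an exchangeable $\II^e$ via Proposition \ref{prop_corresp_exch}, invoke Lemma \ref{lemma_sum_distrib2} to match the sum distributions, and then expand $f_{\II^e}$ over the extremal points $\ee_k$ of $\mathcal{E}_d(p)$.

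One remark: your last paragraph on extremality goes beyond what the paper actually proves. In both Theorem \ref{geomSumCop} and Theorem \ref{geomSumCop2} the paper establishes only that the cdfs $F_{S^{(\XX,\EE_k)}}$ \emph{generate} the polytope (any $F_S$ is a convex combination of them), and stops there; it does not verify that none of the $F_{S^{(\XX,\EE_k)}}$ is itself a nontrivial convex combination of the others. Your instinct that this would require an injectivity argument (distinct $f_{S^{(\II)}}$ giving distinct $F_{S^{(\XX)}}$, via linear independence of the mixands in the analogue of \eqref{distrF}) is correct, and your caution about it is well placed---but for the purposes of matching the paper, you may omit that step.
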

		
		\begin{proof}
			The proof is similar to the one of Theorem \ref{geomSumCop}.
		\end{proof}
		
		Theorem \ref{geomSumCop2} completes the relationship in \eqref{iff2} as follows:
		\begin{equation*}\label{iff3}
			\mathcal{S}_d^p(F) \leftrightarrow \mathcal{S}_d^p \leftrightarrow 	\mathcal{E}_d(p) \leftrightarrow \mathcal{D}_d(dp).
		\end{equation*}
		From this relationship, it follows that the number of extremal points in $\mathcal{S}_d^p(F)$ is $n^{\mathcal{D}}_p$, that is significantly lower than the number of extremal points in $\mathcal{G}_d^p(F)$ and they are analytical. Therefore we can also find them in high dimensions.

		Finally, we have the following generalization of Theorem \ref{thm_sum_U}.
		
		\begin{theorem} \label{prop:cxOrder1}
			Let $\II,\II' \in \BBB_d(p)$ and let $\XX\in \GGG_d^p(F)$ and $\XX'\in \GGG_d^p(F)$, for any cdf $F$, be respectively built from $\II$ and $\II'$, as in (\ref{stoch_repre_X}).
			Then, 
			\begin{equation*}
				\sum_{j=1}^d I_j \preceq_{cx} \sum_{j=1}^d I'_j \implies \sum_{j=1}^d X_j \preceq_{cx} \sum_{j=1}^d X'_j.
			\end{equation*}
		\end{theorem}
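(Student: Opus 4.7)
The plan is to mirror the proof of Theorem~\ref{thm_sum_U}, replacing the uniform margins by the arbitrary margin $F$ while carrying along the stochastic representation of Theorem~\ref{thm_stoch_repre_X}. First, by Proposition~\ref{prop_corresp_exch}, I would pass to exchangeable Bernoulli vectors $\II^e, \II^{e'} \in \EEE(p)$ such that $\sum_{j=1}^d I^e_j \overset{\mathcal{L}}{=} \sum_{j=1}^d I_j$ and $\sum_{j=1}^d I^{e'}_j \overset{\mathcal{L}}{=} \sum_{j=1}^d I'_j$. Let $\XX^e$ and $\XX^{e'}$ be the vectors in $\GGG_d^p(F)$ built from $\II^e$ and $\II^{e'}$ via the representation \eqref{stoch_repre_X}. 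By Lemma~\ref{lemma_sum_distrib2}, $\sum_{j=1}^d X_j \overset{\mathcal{L}}{=} \sum_{j=1}^d X^e_j$ and $\sum_{j=1}^d X'_j \overset{\mathcal{L}}{=} \sum_{j=1}^d X^{e'}_j$, so it suffices to establish $\sum_{j=1}^d X^e_j \preceq_{cx} \sum_{j=1}^d X^{e'}_j$.

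Second, since the convex order between the sums is preserved under the passage to exchangeable versions (equality in distribution), the hypothesis $\sum_{j=1}^d I_j \preceq_{cx} \sum_{j=1}^d I'_j$ transfers to $\sum_{j=1}^d I^e_j \preceq_{cx} \sum_{j=1}^d I^{e'}_j$. I would then invoke the equivalence from Section~3 of \cite{frostig2001comparison} used in the proof of Theorem~\ref{thm_sum_U}, namely that for exchangeable Bernoulli vectors convex order of the sum is equivalent to supermodular order of the vectors themselves, to conclude $\II^e \preceq_{sm} \II^{e'}$. By Theorem~4.2 of \cite{blier2022stochastic}, this yields $\UU^e \preceq_{sm} \UU^{e'}$, where $\UU^e$ and $\UU^{e'}$ are the uniform vectors associated through \eqref{eq:representation-u} to $\II^e$ and $\II^{e'}$.

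Third, I would lift the supermodular order from the copula level to the vector level. From the construction in Theorem~\ref{thm_stoch_repre_X}, $X^e_j \overset{\mathcal{L}}{=} F^{-1}(U^e_j)$ componentwise with the same underlying $\UU_0, \UU_1$, and analogously for $\XX^{e'}$. Since the supermodular order is preserved under coordinatewise non-decreasing transformations (applied here to the common, non-decreasing $F^{-1}$ in each coordinate), we obtain $\XX^e \preceq_{sm} \XX^{e'}$. Finally, for any convex $\phi\colon\RR\to\RR$ the map $\psi(\xx) = \phi(x_1+\cdots+x_d)$ is supermodular, so $E[\phi(\sum_{j=1}^d X^e_j)] \leq E[\phi(\sum_{j=1}^d X^{e'}_j)]$, which combined with the distributional equalities from Lemma~\ref{lemma_sum_distrib2} gives $\sum_{j=1}^d X_j \preceq_{cx} \sum_{j=1}^d X'_j$.

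The main obstacle is the lifting step from $\UU^e \preceq_{sm} \UU^{e'}$ to $\XX^e \preceq_{sm} \XX^{e'}$. The cleanest justification uses the fact that one may realize both pairs on a common probability space with the same $(\UU_0,\UU_1)$ (as the stochastic representation already does), so that $X^e_j$ and $X^{e'}_j$ are obtained from $U^e_j$ and $U^{e'}_j$ by the same non-decreasing map $F^{-1}$ applied coordinatewise; the standard preservation of $\preceq_{sm}$ under such transformations then applies. An alternative, should that argument need more care for general (possibly discontinuous) $F$, is to condition on $\II^e$ and $\II^{e'}$, use the conditional independence of the components granted by the representation \eqref{stoch_repre_X}, and argue supermodular dominance of the mixtures directly, exactly as in Lemma~\ref{lemma_sum_distrib2} but keeping the vector structure.
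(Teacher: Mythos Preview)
Your proposal is correct and follows essentially the same approach as the paper, which simply states that the proof is ``along the same lines'' as that of Theorem~\ref{thm_sum_U}. You have correctly spelled out the one extra ingredient needed beyond the uniform case, namely the lifting of $\UU^e \preceq_{sm} \UU^{e'}$ to $\XX^e \preceq_{sm} \XX^{e'}$ via the coordinatewise non-decreasing map $F^{-1}$ (indeed, under the coupling implicit in Theorem~\ref{thm_stoch_repre_X} one has $X^e_j = F^{-1}(U^e_j)$), which is a standard preservation property of the supermodular order.
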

		
		\begin{proof}
			The proof of this Theorem is along the same lines at the one of Theorem \ref{thm_sum_U}. 
		\end{proof}

		We conclude this section by considering the two examples with exponential and discrete margins previously discussed in Example~\ref{ex:DiscreteMargins} and Example~\ref{ex:ExponMargins} but here in the special case of identically distributed risks. 
		
		\setcounter{example}{0}
		\begin{example}[Discrete margins, \textbf{continued}]
			For the case of identically distributed discrete margins, we obtain the following expression for the pmf of $S^{(\XX)}$:
			\begin{equation}
				f_{S^{(\XX)}}(k)
				=
				\sum_{j=0}^d f_{S^{(\II)}}(j) f_{Z_0}^{*(d-j)} * f_{Z_1}^{*j}(k),
				\quad
				k \in \mathbb{N},
				\label{eq:FSdiscrete}
			\end{equation}
			where * denotes the convolution product and $f^{*j}$ denotes the $j$-fold convolution product of the probability density function measure $f$ with itself.
			It follows from \eqref{eq:PfgSdiscrete} that the probability generating function (pgf) of $S$ is given by
			\begin{equation}
				\mathcal{P}_{S^{(\XX)}}(s)
				=
				\sum_{j=0}^d f_{S^{(\II)}}(j) 
				\mathcal{P}_{Z_0}^{d-j}(s) \mathcal{P}_{Z_1}^{j}(s),
				\quad
				s \in [0,1],
				\label{eq:pgfSdiscrete}
			\end{equation} 
			where $\mathcal{P}_{Z_0}$ and
			$\mathcal{P}_{Z_1}$ are the pgf of $Z_0$ and
			$Z_1$, respectively. 
			As previously mentioned, it is more convenient to use the Fast Fourier Transform (\texttt{FFT}) algorithm to extract the values of the pmf of $S^{(\XX)}$ from the pgf in \eqref{eq:pgfSdiscrete} rather than finding those values directly from \eqref{eq:FSdiscrete}. The procedure is illustrated in Example \ref{ex:final} provided in Section~\ref{NumILL}.
		\end{example}
		\begin{example}[Exponential margins, \textbf{continued}]
			Firstly, assume $\XX \in \mathcal{G}^{p}_d(Exp(\lambda))$, that is $X_j \sim Exp(\lambda)$ for every $j \in \{1,\dots, d\}$, with GFGM($p$) copula. The LST of $S^{(\XX)}$ in \eqref{eq:GFGMLSTofSd} becomes
			\begin{equation*} 
				\mathcal{L}_{S^{(\XX)}}(t) =   \left( \mathcal{L}_{W_{1}}(t) \right) ^d
				\left(
				\sum_{j=0}^d f_{S^{(\II)}}(j) 
				\left( \mathcal{L}_{W_{2}}(t) \right)^{j}
				\right)
				, 
				\quad t \geq 0,  \label{eq:GFGMLSTofSdwithN}
			\end{equation*}
			where $S^{(\II)} = \sum_{j=1}^d I_j$. By identification of the LST transform, we conclude
			\begin{equation*} \label{eq:FSextremalGFGM}
				F_{S^{(\XX)}}(x)
				=f_{S^{(\II)}}(0)F_{W_{1}}^{*d}(x)+
				\sum_{j=1}^d f_{S^{(\II)}}(j) F_{W_{1}}^{*d}
				* F_{W_{2}}^{*j}(x),
				\quad
				x \geq 0,
			\end{equation*}
			where $F_{W_{1}}^{*d}$ corresponds to the cdf of an $Erlang(d, \frac{\lambda}{1-p})$ distribution and $F_{W_{2}}^{*j}$ corresponds to the cdf of an $Erlang(j, {\lambda})$ distribution for $j\in \{1,\ldots, d\}$.
		\end{example}

		\subsection{Minimal convex sums} \label{sec:min_cx_sum}
		The solution to the problem of finding the distribution of the vectors with minimal convex sums is a corollary of 
		our main theorems, Theorem \ref{thm_sum_U} and Theorem \ref{prop:cxOrder1}.  Before introducing it, we need some notations.
		Following Definition 3.4 of \cite{puccetti2015extremal}, we say that  $\XX\in\FFF$ is a $\Sigma_{cx}$-smallest element in a class of distributions $\FFF$ if, for all $\XX'\in \FFF$,
		$\sum_{j=1}^d X_j\preceq_{cx}\sum_{j=1}^d X'_j.$
		A $\Sigma_{cx}$-smallest element in a Fr\'echet class does not always exist, see Example 3.1 of \cite{bernard2014risk}. 
		However, the authors of \cite{hu1999dependence} found the distribution of the exchangeable Bernoulli random vector that is the $\Sigma_{cx}$-smallest element in the class of exchangeable Bernoulli pmfs $\mathcal{E}_d(p)$.
		Since there is a one-to-one correspondence between $\mathcal{E}_d(p)$ and $\mathcal{D}_d(dp)$, see (\ref{iff}), and the sums of the components of a random vector with pmf in $\BBB_d(p)$ are rvs with distribution in $\mathcal{D}_d(dp)$, a $\Sigma_{cx}$-smallest element always exists in the class $\BBB_d(p)$.  
		Actually, for each $p \in (0,1)$, we can also build non-exchangeable $\Sigma_{cx}$-smallest elements of $\BBB_d(p)$ following Theorem 5.2 of \cite{fontana2024high}. In the proof of Lemma 3.1 of \cite{bernard2017robust}, the authors provide a way to construct a random variable with $\Sigma_{cx}$-smallest pmf.
		Let $\II \in \mathcal{B}_d(p)$ and let $\UU$ be the corresponding uniform random vector with GFGM$(p)$ copula. 
		The following corollary is a straightforward but important consequence of Theorem~\ref{thm_sum_U} and Theorem~\ref{prop:cxOrder1}. 
		
		\begin{corollary}
			\label{corol:SmallestElement}
			Let  $\II\in \mathcal{B}_d(p)$ be a $\Sigma_{cx}$-smallest element.  
			\begin{enumerate}
				\item Let $\UU$ be a uniform random vector whose joint cdf is the GFGM$(p)$ copula corresponding to $\II$.  Then,  $\UU$ is a $\Sigma_{cx}$-smallest element in $ \mathcal{C}_d^p$. 
				\item Let  $\XX \in \GGG^p_d(F)$ with joint cdf defined with the GFGM$(p)$ copula corresponding to $\II$. Then, $\XX$ is a $\Sigma_{cx}$-smallest element in $\GGG^p_d(F)$.
			\end{enumerate}
		\end{corollary}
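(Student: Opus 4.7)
The plan is to realize both claims as direct corollaries of the convex-order preservation results already proved. The template is identical for the two parts: take an arbitrary competitor in the target class, extract its underlying Bernoulli vector via the appropriate stochastic representation, invoke the $\Sigma_{cx}$-smallest hypothesis on $\II$ to compare the Bernoulli sums, and then transport that comparison to the sum of the vector of interest through the relevant preservation theorem.

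For part 1, I would first fix any $\UU' \in \mathcal{C}_d^p$ and use the stochastic representation~\eqref{eq:representation-u} recalled in Section~\ref{GFGMprel} to obtain an associated $\II' \in \BBB_d(p)$ such that $\UU'$ has cdf equal to the GFGM$(p)$ copula of $\II'$. Because $\II$ is $\Sigma_{cx}$-smallest in $\BBB_d(p)$, the inequality $\sum_{j=1}^d I_j \preceq_{cx} \sum_{j=1}^d I'_j$ holds. Theorem~\ref{thm_sum_U} then propagates this comparison to the sum of the uniform components, giving $\sum_{j=1}^d U_j \preceq_{cx} \sum_{j=1}^d U'_j$; by arbitrariness of $\UU'$ this completes the case.

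For part 2 I would follow the same template, but with Theorem~\ref{thm_stoch_repre_X} in place of~\eqref{eq:representation-u} and Theorem~\ref{prop:cxOrder1} in place of Theorem~\ref{thm_sum_U}. Concretely, fix any $\XX' \in \GGG^p_d(F)$, recover the associated $\II' \in \BBB_d(p)$ from the representation~\eqref{stoch_repre_X}, exploit again the $\Sigma_{cx}$-smallest property of $\II$ to obtain $\sum_{j=1}^d I_j \preceq_{cx} \sum_{j=1}^d I'_j$, and conclude $\sum_{j=1}^d X_j \preceq_{cx} \sum_{j=1}^d X'_j$ by Theorem~\ref{prop:cxOrder1}.

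I do not expect a real obstacle here, since the heavy lifting has already been invested in Theorems~\ref{thm_sum_U} and~\ref{prop:cxOrder1}, whose proofs carry the supermodular-to-convex bridge based on \cite{frostig2001comparison} and \cite{blier2022stochastic} together with Lemma~\ref{lemma_sum_distrib} and Proposition~\ref{prop_corresp_exch}. The only mild point worth checking is that the stochastic representations genuinely furnish, for every competitor $\UU' \in \mathcal{C}_d^p$ (respectively every $\XX' \in \GGG^p_d(F)$), some $\II' \in \BBB_d(p)$ against which the $\Sigma_{cx}$-smallest hypothesis on $\II$ may be invoked; this is immediate from the surjective parameterization established in \cite{blier2024new} for part~1 and from Theorem~\ref{thm_stoch_repre_X} for part~2.
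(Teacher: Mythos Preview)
Your proposal is correct and matches the paper's approach exactly: the paper states that the corollary is a ``straightforward but important consequence of Theorem~\ref{thm_sum_U} and Theorem~\ref{prop:cxOrder1}'' without spelling out the argument, and what you have written is precisely the intended unpacking of that remark. The only addition you make is the explicit observation that every competitor in $\mathcal{C}_d^p$ or $\GGG_d^p(F)$ arises from some $\II'\in\BBB_d(p)$ via the stochastic representations, which is indeed immediate from the cited results.
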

		Consequently, distributions, for which the lower bounds of a convex functional are reached, are built using a $\Sigma_{cx}$-smallest element of $\mathcal{B}_d(p)$. 
		Obviously, using the upper Fr\'echet bound of $\mathcal{B}_d(p)$, we build the distributions of vectors with maximal convex sums.

		\section{Sharp bounds for risk measures} \label{sec:SharpBounds}

		As a consequence of Corollary \ref{corollary3.1}, to derive sharp bounds for risk measures in the class $\mathcal{S}_d^{\pp}(F_1,\ldots, F_d)$, we can proceed by enumerating their values on the extremal points. This is computationally expensive because the number of extremal points $n_p^{\mathcal{B}}$ explodes and becomes larger, as highlighted by the authors of \cite{fontana2024high}. In this section, we illustrate how this problem can be solved by finding sharp bounds for risk measures in the classes $\mathcal{C}_d^p$ and $\GGG^p_d(F_X)$, for any $p \in (0,1)$.

		As a motivation, we start with an example showing that the assumption of common margins in Lemma \ref{lemma_sum_distrib2} and Theorem \ref{prop:cxOrder1} is necessary.
		\begin{example}
			Consider the class $\mathcal{G}_3^{2/5}(F_1, F_2, F_3)$, where $F_1$, $F_2$, and $F_3$ are the discrete cdfs provided in Table \ref{contexpmf}.
			\begin{table}[hbt!]
				\centering
				\begin{tabular}{c|ccc}
					\toprule
					$k$ & $F_1(k)$ & $F_2(k)$ & $F_3(k)$ \\ 
					\midrule
					0 & 0.1 & 0.1 & 0.8 \\ 
					1 & 0.2 & 0.4 & 1.0 \\ 
					2 & 0.3 & 0.7 & 1.0 \\ 
					3& 1.0 & 1.0 & 1.0 \\ 
					\bottomrule
				\end{tabular}
				\caption{\emph{Marginal cdfs.}}
				\label{contexpmf}
			\end{table}
			Let $\XX$, $\XX'$, and $\XX'' \in \mathcal{G}_3^{2/5}(F_1, F_2, F_3)$ with copulas respectively defined by the Bernoulli rvs $\II$, $\II'$, and $\II''$ respectively distributed according $\ff$, $\ff'$, and $\ff'' \in \BBB_3(\tfrac{2}{5})$ given in Table \ref{Bernoulli}.
			\begin{table}[hbt!]
				\centering
				\begin{tabular}{c|cccccccc}
					\toprule
					$\ii $& (0,0,0) & (1,0,0) & (0,1,0) & (1,1,0) & (0,0,1) & (1,0,1) & (0,1,1) & (1,1,1) \\ 
					\midrule
					$\ff$ & 0 & $\tfrac{1}{5}$ & $\tfrac{1}{5}$ & $\tfrac{1}{5}$ & $\tfrac{2}{5}$ & 0 & 0 & 0 \\ 
					$\ff'$ & $\tfrac{1}{5}$ & 0 & $\tfrac{2}{5}$ & 0 & 0 & $\tfrac{2}{5}$ & 0 & 0 \\
					$\ff''$ & 0 & $\tfrac{2}{5}$ & $\tfrac{1}{5}$ & 0 & $\tfrac{1}{5}$ & 0 & $\tfrac{1}{5}$ & 0 \\
					\bottomrule
				\end{tabular}
				\caption{\emph{Bernoulli pmfs.}}
				\label{Bernoulli}
			\end{table}
			Table \ref{contexsum} exhibits the values of pmfs of the discrete random variables $S^{(\XX)}=\sum_{j=1}^3X_j$, $S^{(\XX')}=\sum_{j=1}^3X'_j$, and $S^{(\XX'')}=\sum_{j=1}^3X''_j$; those values are computed using the pgf and the \texttt{FFT} algorithm as explained in Example \ref{ex:DiscreteMargins}.
			As one can see by looking at the support of the Bernoulli distributions in Table \ref{Bernoulli}, although $\II$ and $\II''$ have the same distribution of the sum, $S^{(\XX)}$ and $S^{(\XX'')}$ are not equal in distribution.
			Moreover, it is easy to show that $S^{(\II)}=\sum_{j=1}^3 I_j \preceq_{cx} S^{(\II')}=\sum_{j=1}^3 I'_j$, but $S^{(\XX)}\npreceq_{cx} S^{(\XX')}$, since $Var(S^{(\XX)})=2.0633\geq 1.8865= Var(S^{(\XX')})$.
			
		\end{example}
		
		\begin{table}[hbt!]
			\centering
			\begin{tabular}{c|cccccccccc}
				\toprule
				$k$ & 0 & 1 & 2 & 3 & 4 & 5 & 6 & 7 & 8 & 9 \\ 
				\midrule
				$f_{S^{(\XX)}}(k)$ & 0.0080 & 0.0338 & 0.0640 & 0.1328 & 0.2467 & 0.2592 & 0.2312 & 0.0242 & 0 & 0 \\
				$f_{S^{(\XX')}}(k)$ & 0.0032 & 0.0249 & 0.0602 & 0.1556 & 0.2636 & 0.2569 & 0.2004 & 0.0352 & 0 & 0 \\
				$f_{S^{(\XX'')}}(k)$ & 0.0029 & 0.0214 & 0.0549 & 0.1588 & 0.2798 & 0.2521 & 0.1976 & 0.0324 & 0 & 0 \\ 
				\bottomrule
			\end{tabular}
			\caption{\emph{Pmfs of $S^{(\XX)}$ and $S^{(\XX')}$.}}
			\label{contexsum}
		\end{table}
		
		We consider two convex risk measures, the widely used expected shortfall (ES), and the entropic risk measure. Then we consider, consistently with Regulation, the value-at-risk (VaR), which is not a convex measure. Below we recall the definition of these three measures of risk.
		
		\begin{definition}
			Let Y be a random variable representing a loss with finite mean. Then, the value-at-risk at level $\alpha \in (0,1)$ is given by
			\begin{equation*}
				\text{VaR}_{\alpha}(Y) = \inf\{ y \in\RR : \Pr(Y \leq y) \geq \alpha \}.
			\end{equation*}
		\end{definition}
		\begin{definition}\label{def:cxmeas}
			Let Y be a random variable representing a loss with finite mean. The expected shortfall at level $\alpha \in (0,1)$ is defined as
			\begin{equation}\label{EF}
				\text{ES}_{\alpha}(Y) = \frac{1}{1-\alpha} \int_{\alpha}^{1} \text{VaR}_{u}(Y) \mathrm{d}u.
			\end{equation}
		\end{definition}
		The expected shortfall defined in \eqref{EF} also admits the following representation
		\begin{equation*}
			\text{ES}_{\alpha}(Y) = \text{VaR}_{\alpha}(Y)+\frac{1}{1-\alpha} E[\max(Y- \text{VaR}_{\alpha}(Y),0)], \, for \,\,\, \alpha\in(0,1).
		\end{equation*}
		\begin{definition}
			The entropic risk measure is defined by 
			\begin{equation*}
				\Psi_{\gamma}(Y) = \frac{1}{\gamma} \log(E[e^{\gamma Y}]),
			\end{equation*}
			assuming that there exists a real number $\gamma_0>0$ such that $E[e^{\gamma Y}]$ is finite for $\gamma\in (0, \gamma_0)$.
		\end{definition}	
		Using the results from Corollary~\ref{corol:SmallestElement}, we can analytically find the lower bounds of the convex risk measures considered for exponential margins and discrete margins, respectively.
		Although the  $\VaR_{\alpha}$ is not convex we can find its bounds in $\mathcal{S}^p_d(F)$.
		The authors of \cite{fontana2021model} prove that the bounds of the $\VaR_{\alpha}$ in a class of univariate distributions that has a convex polytope structure are reached at the extremal points.  
		We consider a random vector $\XX$ with distribution in the class $\mathcal{G}_d^p(F)$, whose sum $S^{(\XX)}=X_1+\ldots+X_d$ have cdf in $\mathcal{S}_d^p(F)$. 
		Despite the number of extremal points of $\mathcal{G}_d^p(F)$ being $n_p^{\mathcal{B}}$, as a consequence of Theorem \ref{geomSumCop2} we can restrict our attention to the $n_p^{\mathcal{D}}$ extremal points of $\mathcal{S}_d^p(F)$. By computing the $\VaR_{\alpha}$ of the random variable $S^{(\XX,\EE_k)}$, for every $k \in \{1,\dots,n_p^{\mathcal{D}} \}$, it is possible to find maximum and minimum values that $\VaR_{\alpha}(X_1+\dots+X_d)$ can reach.

		\begin{remark}In some cases, the minimum $\VaR_{\alpha}$ is reached on the minimal $\Sigma_{cx}$-element of the polytope.
			In fact, from the proof of Theorem 3.A.4 in \cite{shaked2007stochastic}, it follows that, if $S \preceq_{cx} S'$, then there exists $\tilde{\alpha} \in (0,1)$ such that, $\text{VaR}_{\alpha}(S) \leq \text{VaR}_{\alpha}(S')$, for every $\alpha \in(\tilde{\alpha},1)$. 
			Therefore, if $\XX=(X_1,\ldots,X_d)$ is a $\Sigma_{cx}$-smallest element of its Fr\'echet class, then there exists $\tilde{\alpha} \in (0,1)$ such that, $\text{VaR}_{\alpha}(\sum_{j=1}^{d}X_j) \leq \text{VaR}_{\alpha}(\sum_{j=1}^{d}X'_j)$, for every $\alpha \in(\tilde{\alpha},1)$, for every random vector $\boldsymbol{X}'=(X'_1,\ldots,X'_d)$ of the same Fr\'echet class of $\XX$.\end{remark}

		We present below two numerical illustrations of our results.
		\begin{example}
			Consider the case $d=5$, $p=\tfrac{1}{2}$. Then, $dp=\tfrac{5}{2}$ and $j_1^{\vee}=2$, $j_2^{\wedge}=3$. The class $\mathcal{D}_5(\tfrac{5}{2})$ has $n^{\mathcal{D}}_{1/2}=9$ extremal points provided in Table~\ref{Table_ExtrPoints}.
			Let us compute value-at-risk, expected shortfall, and entropic risk measures of the extremal pmfs for $\alpha=0.8$ and $\gamma=0.1$. 
			Results are reported in Table~\ref{Table_ResultsBernoulli}. 
			The choice of $\alpha$ has been made to exhibit the case where the minimum $\VaR_{\alpha}$ is not the $\Sigma_{cx}$-smallest element of the class.  
			Table~\ref{Table_ResultsFGM} reports instead the same risk measures evaluated on the corresponding FGM copulas.  
			Notice that the bounds for the sums are at the extremal copulas corresponding to the upper Fr\'echet bound and to the $\Sigma_{cx}$-smallest Bernoulli pmfs for the convex measures, as proved in Corollary~\ref{corol:SmallestElement}. 
			The minimum $\VaR_{\alpha}$ in $\mathcal{C}_5^{1/2}$ is reached at $C_{\rr^{D}_7}$ while for the Bernoulli case it is at ${\rr^{D}_9}$. 
			This proves that the minimum $\VaR_{\alpha}$ of the sum of the components of $\textbf{U}$ in $\mathcal{C}_5^{1/2}$ is not inherited from the underlying Bernoulli pmf. 
			
			\begin{table}[hbt!]
				\centering
				\begin{tabular}{c|ccccccccc}
					\toprule
					$y$ & $\rr^{D}_1$ & $\rr^{D}_2$ & $\rr^{D}_3$ & $\rr^{D}_4$ & $\rr^{D}_5$ & $\rr^{D}_6$ & $\rr^{D}_7$ & $\rr^{D}_8$ & $\rr^{D}_9$ \\
					\midrule
					0 & $\tfrac{1}{6}$ & $\tfrac{3}{8}$ & $\tfrac{1}{2}$ & 0    & 0   & 0   & 0   & 0    & 0 \\
					1 & 0   & 0   & 0   & $\tfrac{1}{4}$ & $\tfrac{1}{2}$ & $\tfrac{5}{8}$ & 0   & 0    & 0 \\
					2 & 0   & 0   & 0   & 0    & 0   & 0   & $\tfrac{1}{2}$ & $\tfrac{3}{4}$ & $\tfrac{5}{6}$ \\
					3 & $\tfrac{5}{6}$ & 0   & 0   & $\tfrac{3}{4}$ & 0   & 0   & $\tfrac{1}{2}$ & 0    & 0 \\
					4 & 0   & $\tfrac{5}{8}$ & 0   & 0    & $\tfrac{1}{2}$ & 0   & 0   & $\tfrac{1}{4}$ & 0 \\
					5 & 0   & 0   & $\tfrac{1}{2}$ & 0    & 0   & $\tfrac{3}{8}$ & 0   & 0    & $\tfrac{1}{6}$ \\
					\bottomrule
				\end{tabular}
				\caption{\emph{Extremal pmfs of the class $\mathcal{D}_5(\tfrac{5}{2})$.}}
				\label{Table_ExtrPoints}
			\end{table}
			
			\begin{table}[hbt!]
				\centering
				\begin{tabular}{c|ccccccccc}
					\toprule
					& $\rr^{D}_1$ & $\rr^{D}_2$ & $\rr^{D}_3$ & $\rr^{D}_4$ & $\rr^{D}_5$ & $\rr^{D}_6$ & $\rr^{D}_7$ & $\rr^{D}_8$ & $\rr^{D}_9$ \\
					\midrule
					$\text{VaR}_{0.8}(S^{(\II)})$ & 3 & 4 & \underline{5} & 3 & 4   & \underline{5} & 3 & 4 & \squared{2}   \\
					$\text{ES}_{0.8}(S^{(\II)})$  & \squared{3} & 4 & \underline{5} & \squared{3} & 4   & \underline{5} & \squared{3} & 4 & 4.5 \\
					$\Psi_{0.1}(S^{(\II)})$      & 2.5584 & 2.6803 & \underline{2.8093} & 2.5362 & 2.6121 & 2.6927 & \squared{2.5125} & 2.5387 & 2.5667 \\
					\bottomrule 
				\end{tabular}
				\caption{\emph{Values of risk measures of the sum of the components of Bernoulli random vectors with the extremal probability mass functions of the class $\mathcal{D}_5(\tfrac{5}{2})$. The minimum values are squared and the maximum ones are underlined.}}
				\label{Table_ResultsBernoulli}
			\end{table}
			
			\begin{table}[hbt!]
				\centering
				\begin{tabular}{c|ccccccccc}
					\toprule
					& $C_{\rr^{D}_1}$ & $C_{\rr^{D}_2}$ & $C_{\rr^{D}_3}$ & $C_{\rr^{D}_4}$ & $C_{\rr^{D}_5}$ & $C_{\rr^{D}_6}$ & $C_{\rr^{D}_7}$ & $C_{\rr^{D}_8}$ & $C_{\rr^{D}_9}$ \\
					\midrule
					$\text{VaR}_{0.8}(S^{(\UU)})$ & 3.0308 & 3.3281 & \underline{3.4928} & 3.0158 & 3.1710 & 3.2636 & \squared{2.9729} & 3.0180 & 3.0476 \\
					$\text{ES}_{0.8}(S^{(\UU)})$  & 3.4627 & 3.7345 & \underline{3.8401} & 3.3641 & 3.5161 & 3.5846 & \squared{3.2753} & 3.3228 & 3.3477 \\
					$\Psi_{0.1}(S^{(\UU)})$ & 2.5210 & 2.5350 & \underline{2.5486} & 2.5181 & 2.5264 & 2.5345 & \squared{2.5153} & 2.5180 & 2.5207 \\ 
					\bottomrule
				\end{tabular}
				\caption{\emph{Values of risk measures of the sum of components of uniform vectors with joint cdf defined by FGM copulas corresponding to the extremal probability mass functions of the class $\mathcal{D}_5(\tfrac{5}{2})$. The minimum values are squared and the maximum ones are underlined.}}
				\label{Table_ResultsFGM}
			\end{table}

		\end{example}

		The following Example \ref{mainex} is the main example of application of our results. We consider a high dimensional portfolio of risks for six scenarios: three different GFGM($p$) dependence structures for two Fr\'echet classes,  with  exponential and discrete margins discussed in a theoretical setting in Examples \ref{ex:DiscreteMargins} and \ref{ex:ExponMargins}. 
		
		\begin{example}\label{mainex}

			Consider the classes $\GGG_{100}^p\left(Exp \left(\tfrac{1}{10}\right)\right)$ and $\GGG_{100}^p(F)$, where $F$ is the discrete cdf whose pmf is given by
			\begin{equation*}
				f(y)=
				\begin{cases}
					0.8 , & y = 0 \\
					0.2 [\big(\frac{y}{100}\big)^3 - \big(\frac{y-1}{100}\big)^3 ], & y \in \{1,\ldots,100\}
				\end{cases}.             
			\end{equation*}
			We consider three different cases of GFGM(p) dependencies for each class, that is we consider $\GGG_{100}^p \left(Exp \left(\tfrac{1}{10}\right)\right)$ and $\GGG_{100}^p(F)$, where each case is associated to a common $p \in \{ \tfrac{1}{3}, \tfrac{1}{2}, \tfrac{2}{3}\}$. 
			
			The bounds for the convex measures are reached at the distributions of the two classes $\GGG_{100}^p \left(Exp \left(\tfrac{1}{10}\right)\right)$ and $\GGG_{100}^p(F)$ corresponding to the minimal and maximal convex sums in $\mathcal{D}_{100}(100p)$, for $p=\tfrac{1}{3}, p=\tfrac{1}{2}$ and $p=\frac{2}{3}$. The minimal convex sum  is the pmf in $\mathcal{D}_{100}(100p)$ with support on the pair $(33, 34)$ when $p=\tfrac{1}{3}$, with support on the point $50$ when $p=\tfrac{1}{2}$, and support on the pair $(66, 67)$ when $p=\tfrac{2}{3}$.
			Table \ref{CXmeas_Bound} provides the sharp bounds for the convex risk measures with exponential and discrete margins, respectively.

			\begin{table}[hbt!]
				\centering
				\begin{tabular}{c|ccc|ccc}
					\toprule
					& \multicolumn{3}{c|}{$\GGG_{100}^p \left(Exp \left(\tfrac{1}{10}\right)\right)$} & \multicolumn{3}{c}{$\GGG_{100}^p(F)$} \\
					\midrule
					& $p=\frac{1}{3}$  & $p=\frac{1}{2}$ & $p=\frac{2}{3}$ & $p=\frac{1}{3}$  & $p=\frac{1}{2}$ & $p=\frac{2}{3}$ \\
					\midrule
					$\min ES_{0.95}$    & 1191.2742 & 1189.2721 & 1192.3324 & 2152.595 & 2122.718 & 2019.207 \\
					$\max ES_{0.95}$    & 1858.1846 & 1702.8444 & 1540.6192 & 2858.955 & 3448.241 & 4440.057 \\
					$\min \Psi_{0.001}$ & 1003.9212 & 1003.8215 & 1003.9237 & 1555.710 & 1551.957 & 1546.627 \\
					$\max \Psi_{0.001}$ & 1124.6343 & 1125.0510 & 1101.5259 & 1888.303 & 2216.540 & 2843.312 \\
					\bottomrule
				\end{tabular}
				\caption{\emph{Bounds of convex risk measures in the classes $\GGG_{100}^p \left(Exp \left(\tfrac{1}{10}\right)\right)$ and $\GGG_{100}^p(F)$.}}
				\label{CXmeas_Bound}
			\end{table}

			The $\VaR_{\alpha}$ is bounded by its evaluations on the extremal pmfs of the two classes $\mathcal{S}_{100}^p \left(Exp \left(\tfrac{1}{10}\right)\right)$ and $\mathcal{S}_{100}^p(F)$. When $d=100$, the number of extremal points $n_p^{\mathcal{D}}$ is lower than or equal to 2501, see Corollary 4.6 of \cite{fontana2021model}, and we find bounds by enumeration. 
			Table~\ref{Var-Bounds}
			provides the bounds for the $\VaR_{\alpha}$ in the abovementioned classes and also the analytical bounds for the whole Fr\'echet classes given in Equation (4) of \cite{bernard2017robust}. 
			We mention that the minimum $\VaR_{0.95}$ in the class $\GGG_{100}^p (F)$ is not reached at the distribution corresponding to the minimal convex pmf in $\mathcal{D}_{100}\left(\tfrac{200}{3}\right)$, whose $\VaR_{\alpha}$ is 1961. 
			
			\begin{table}[hbt!]
				\centering
				\begin{tabular}{c|ccc|ccc}
					\toprule
					& \multicolumn{3}{c|}{$Exp \left(\tfrac{1}{10}\right)$} & \multicolumn{3}{c}{$F$} \\
					\midrule
					& $p=\frac{1}{3}$  & $p=\frac{1}{2}$ & $p=\frac{2}{3}$ & $p=\frac{1}{3}$  & $p=\frac{1}{2}$ & $p=\frac{2}{3}$ \\
					\midrule
					Lower bound $\mathcal{F}_{100}(G)$ & 842.3299 & 842.3299 & 842.3299 & 1045.963 & 1045.963 & 1045.963 \\
					Min $\mathcal{G}^p_{100}(G)$ & 1149.7294 & 1147.0118 & 1150.2229 & 2016 & 1994 & 1960 \\
					Max $\mathcal{G}^p_{100}(G)$ & 1791.3283 & 1645.0538 & 1488.2312 & 2688 & 3258 & 4225 \\
					Upper bound $\mathcal{F}_{100}(G)$& 3995.7323 &3995.7323 &3995.7323 & 9606.61 & 9606.61 & 9606.61 \\
					\bottomrule
				\end{tabular}
				\caption{\emph{$\text{VaR}_{0.95}$ Bounds: $\mathcal{F}_{100}(G)$ is the Fr\'echet class with $100$ identically distributed risks,  $X_j \sim G$, for every $j \in \{1,\dots,100\}$, and $\mathcal{G}^p_{100}(G)$, where $G=Exp \left(\tfrac{1}{10}\right)$ or $G=F$.}}
				\label{Var-Bounds}
			\end{table}

			We conclude this example by considering Pearson's correlation of the exchangeable $\Sigma_{cx}$-smallest element $\XX^e_{cx}$ in $\GGG_{100}^{1/3}\left(Exp \left(\tfrac{1}{10}\right)\right)$ and Pearson's correlation matrix of a vector $\XX_{cx}$ corresponding to the Bernoulli $\Sigma_{cx}$-smallest element provided by the authors of \cite{fontana2024high} in Theorem 5.2.
			Using \eqref{eq:PearsonRhoExpMargin}, Pearson's correlation $\rho_P(X_{j_1}, X_{j_2})$ (denoted by $\rho_{j_1 j_2}$) is equal to $Cov(I_{j_1}, I_{j_1})$, for $1 \leq j_1 < j_2 \leq d$.
			
			We therefore have to find the covariance of the  exchangeable $\Sigma_{cx}$-smallest element and the covariance matrix of the $\Sigma_{cx}$-smallest element $f_{cx}$ in $\BBB_{100}(\tfrac{1}{3})$, obtained following Theorem 5.2 of \cite{fontana2024high}, and given by
			\begin{equation*}\label{pmfb}
				f_{cx}(\xx)=
				\begin{cases}
					\frac{1}{3},  &\xx=(\underbrace{1,\ldots, 1}_{33},0,\ldots, 0,0,\ldots,0) \\
					\frac{1}{3}, &\xx=(0,\ldots, 0,\underbrace{1,\ldots, 1}_{34},0,\ldots, 0) \\
					\frac{1}{3}, &\xx=(0,\ldots, 0,0,\ldots,0, \underbrace{1,\ldots, 1}_{33})
				\end{cases}.
			\end{equation*}
			The equicorrelation of the exchangeable $\Sigma_{cx}$-smallest element in $\GGG_{100}^{1/3}\left(Exp \left(\tfrac{1}{10}\right)\right)$ is $\rho_e =-0.0022$, that is the minimal correlation in the subclass of exchangeable distributions in $\GGG_{100}^{1/3} \left(Exp \left(\tfrac{1}{10}\right)\right)$.
			Let $A = \{1,\dots,33\}^2 \cup \{34,\dots,67\}^2 \cup \{68,\dots,100\}^2$. The entries of Pearson's correlation matrix of $\XX_{cx}$ are given by
			\begin{equation*}
				\rho_P(X_{j_1},X_{j_2})=
				\begin{cases}
					1, & j_1=j_2 \\
					\frac{2}{9}, & j_1 \neq j_2, (j_1,j_2) \in A \\
					-\frac{1}{9}, & j_1 \neq j_2, (j_1,j_2) \in \{1,\dots,100\}^2 \setminus A.
				\end{cases}.
			\end{equation*}
			The mean $\rho_{m}$ of Pearson's correlations of the random vector $\XX_{cx}$ is given by
			\begin{equation} \label{eq:rhom}
				\rho_m= \frac{2}{99 \times 100} \sum_{j_1 = 1}^{99} \sum_{j_2 = j_1 + 1}^{100} \rho_P(X_{j_1},X_{j_2})
				= -0.0022.
			\end{equation}
			From \eqref{eq:rhom} we notice that $\rho_m=\rho_e$, the equicorrelation of the exchangeable vector $\XX^e_{cx}$. This result is a consequence of Corollary~5.2 in \cite{fontana2024high} and of the fact that the Pearson's correlations in the class with the exponential margins is equal to the covariance of the corresponding Bernoulli pair.
			
		\end{example}

		\section{Remarks and conclusion}\label{NumILL}

		We conclude with one example in low dimension of the general class $\GGG_d(\pp)$, with $\pp=(p_1,\ldots, p_d)$, and we leave its theoretical investigation to further research.
		We find the risk measures' sharp bounds for the sum $S=X_1+X_2+X_3$, where $X_i$ have discrete distributions and $\XX$ has a GFGM copula with vector parameter $\pp$. 
		In fact, for $d=3$ we are able to find the extremal points of $\BBB_d(\pp)$, to construct the corresponding copulas, and to find the generators of the convex polytope $\GGG^{\pp}_d (F_1,\ldots, F_d)$. 
		We evaluate the risk measures on the extremal point and we find sharp bounds by enumeration. 
		Furthermore, we find the expected allocation and the expected contribution of $X_i$ for each risk  $\XX$ with extremal pmf, following \cite{blier2022generating}.

		\begin{example}\label{ex:final}
			
			We consider the class $\mathcal{B}_d(\pp)$ with $d=3$ and $\pp=(\tfrac{1}{2}, \tfrac{1}{3}, \tfrac{2}{3})$. 
			In this case there are $n_{\pp}^{\mathcal{B}} = 12$ extremal points that can be found by using \texttt{4ti2}, \cite{4ti2}.
			The extremal points $\boldsymbol{r}_k$, $k \in \{1,\ldots,12\}$, of the class $\mathcal{B}_3(\pp)$ are reported in Table~\ref{rays_ex1}.
			Note that there are three extremal pmfs whose sum is minimal under the convex order: $r_1$, $r_2$, and $r_4$. These three vectors have two couples of Bernoulli rvs with minimal covariance and the remaining pair --- $(I_1,I_2)$ for $r_1$, $(I_1,I_3)$ for $r_2$, and $(I_2,I_3)$ for $r_4$ --- has maximal covariance.
			
			\begin{table}[ht]
				\centering
				\begin{tabular}{c|cccccccccccc}
					\toprule
					$\xx$ & $\rr_{1}$ & $\rr_{2}$ & $\rr_{3}$ & $\rr_{4}$ & $\rr_{5}$ & $\rr_{6}$ & $\rr_{7}$ & $\rr_{8}$ & $\rr_{9}$ & $\rr_{10}$ & $\rr_{11}$ & $\rr_{12}$\\ 
					\midrule
					(0,0,0) & 0 & 0 & 0 & 0 & 0 & 0 & $\tfrac{1}{6}$ & $\tfrac{1}{6}$ & $\tfrac{1}{6}$ & $\tfrac{1}{3}$ & $\tfrac{1}{3}$ & $\tfrac{1}{4}$ \\ 
					(1,0,0) & 0 & 0 & $\tfrac{1}{6}$ & $\tfrac{1}{3}$ & $\tfrac{1}{3}$ & $\tfrac{1}{4}$ & 0 & $\tfrac{1}{6}$ & $\tfrac{1}{6}$ & 0 & 0 & 0 \\ 
					(0,1,0) & 0 & $\tfrac{1}{3}$ & 0 & 0 & 0 & $\tfrac{1}{12}$ & $\tfrac{1}{6}$ & 0 & 0 & 0 & 0 & 0 \\ 
					(1,1,0) & $\tfrac{1}{3}$ & 0 & $\tfrac{1}{6}$ & 0 & 0 & 0 & 0 & 0 & 0 & 0 & 0 & $\tfrac{1}{12}$ \\ 
					(0,0,1) & $\tfrac{1}{2}$ & $\tfrac{1}{6}$ & $\tfrac{1}{2}$ & $\tfrac{1}{6}$ & $\tfrac{1}{3}$ & $\tfrac{5}{12}$ & 0 & 0 & $\tfrac{1}{3}$ & 0 & $\tfrac{1}{6}$ & 0 \\ 
					(1,0,1) & $\tfrac{1}{6}$ & $\tfrac{1}{2}$ & 0 & $\tfrac{1}{6}$ & 0 & 0 & $\tfrac{1}{2}$ & $\tfrac{1}{3}$ & 0 & $\tfrac{1}{3}$ & $\tfrac{1}{6}$ & $\tfrac{5}{12}$ \\ 
					(0,1,1) & 0 & 0 & 0 & $\tfrac{1}{3}$ & $\tfrac{1}{6}$ & 0 & $\tfrac{1}{6}$ & $\tfrac{1}{3}$ & 0 & $\tfrac{1}{6}$ & 0 & $\tfrac{1}{4}$ \\ 
					(1,1,1) & 0 & 0 & $\tfrac{1}{6}$ & 0 & $\tfrac{1}{6}$ & $\tfrac{1}{4}$ & 0 & 0 & $\tfrac{1}{3}$ & $\tfrac{1}{6}$ & $\tfrac{1}{3}$ & 0 \\ 
					\bottomrule
				\end{tabular}
				\caption{\emph{Extremal points $\boldsymbol{r}_k$, $k \in \{1,\ldots,12\}$, of the class $\BBB_3(\tfrac{1}{2}, \tfrac{1}{3}, \tfrac{2}{3})$.} }
				\label{rays_ex1}
			\end{table}
			
			We now consider the class $\mathcal{G}_3^{\pp}(F_1, F_2, F_3)$, where $F_i$, $i \in \{1,2,3\}$, is the cdf whose pmf $f_i$ is defined by
			\begin{equation*}
				f_i(y)=
				\begin{cases}
					1-a_i , & y = 0 \\
					a_i [\big(\frac{y}{n}\big)^{c_i} - \big(\frac{y-1}{n}\big)^{c_i} ], & y \in \{1,\ldots,n\}
				\end{cases},               
			\end{equation*}
			and we choose $n=1000$, $a_1=0.2,\,\,\, a_2=0.1,\,\,\ a_3=0.3$ and $c_1=3,\,\,\, c_2=4,\,\,\, c_3=2$. Also, we have $E[X_1] = 150.09995$, $E[X_2] = 80.04997$, $E[X_3] = 200.14995$ and $E[S] = 430.29987$.

			The lower and upper bounds of convex risk measures are reached at the random vectors corresponding to the extremal points $\rr_1$ and $\rr_{11}$, respectively.
			In Table \ref{Table_corrX}, we provide Pearson's correlation coefficients $\rho(X_1,X_2)$, $\rho(X_1,X_3)$, and $\rho(X_2,X_3)$ for all of the twelve extremal dependence structures. Notice that the correlation matrix of $\XX_1$ also has negative entries, i.e $(X_{1,1}, X_{1,3})$ and $(X_{1,2}, X_{1,3})$ are negatively correlated, while  $(X_{1,1}, X_{1,2})$ has the same correlation as $(X_{11,1}, X_{11,2})$. This last equality follows observing that the correlation $\rho_1(1,2)$ between $(r_{1,1}, r_{1,2})$ and the correlation $\rho_{11}(1,2)$ between $(r_{11,1}, r_{11,2})$ are equal, in fact
			
			\begin{equation*}
				\rho_1(1,2)=\frac{r_1((1,1,0))+r_1((1,1,1))-\frac{1}{2}\frac{1}{3}}{\sqrt{\frac{1}{2}(1-\frac{1}{2})\frac{1}{3}(1-\frac{1}{3})}} = 0.0605,
			\end{equation*}
			and 
			\begin{equation*}
				\rho_{11}(1,2)=\frac{r_{11}((1,1,0))+r_{11}((1,1,1))-\frac{1}{2}\frac{1}{3}}{\sqrt{\frac{1}{2}(1-\frac{1}{2})\frac{1}{3}(1-\frac{1}{3})}} = 0.0605,
			\end{equation*}
			coincide since $r_1((1,1,0))+r_1((1,1,1))=r_{11}((1,1,0))+r_{11}((1,1,1))=\frac{1}{3}$.

			\begin{table}[ht!]
				\centering
				\begin{tabular}{c|cccccc}
					\toprule
					& ${\rr_1}$ & ${\rr_2}$ & ${\rr_3}$ & ${\rr_4}$ & ${\rr_5}$ & ${\rr_6}$  \\ 
					\midrule
					$\rho(X_1,X_2)$ & 0.0605 & -0.0605 & 0.0605 & -0.0605 & 0.0000 & 0.0302 \\
					$\rho(X_1,X_3)$ & -0.1610 & 0.1610 & -0.1610 & -0.1610 & -0.1610 & -0.0805 \\ 
					$\rho(X_2,X_3)$ & -0.1229 & -0.1229 & -0.0307 & 0.0615 & 0.0615 & 0.0154 \\ 
					\bottomrule
					\toprule
					& ${\rr_7}$ & ${\rr_8}$ & ${\rr_9}$& ${\rr_{10}}$ & ${\rr_{11}}$ & ${\rr_{12}}$  \\
					\midrule
					$\rho(X_1,X_2)$ & -0.0605 & -0.0605 & 0.0605 & 0.0000 & 0.0605 & -0.0302 \\
					$\rho(X_1,X_3)$ & 0.1610 & 0.0000 & 0.0000 & 0.1610 & 0.1610 & 0.0805 \\ 
					$\rho(X_2,X_3)$ & -0.0307 & 0.0615 & 0.0615 & 0.0615 & 0.0615 & 0.0154 \\ 
					\bottomrule
				\end{tabular}
				\caption{Pearson's coefficients of $\XX$.} \label{Table_corrX}
			\end{table}
			
			\begin{table}[ht]
				\centering
				\begin{tabular}{r|rrrrrr}
					\toprule
					& ${\rr_1}$ & ${\rr_2}$ & ${\rr_3}$ & ${\rr_4}$ & ${\rr_5}$ & ${\rr_6}$  \\ 
					\midrule
					$VaR_{0.95}(S)$ &\squared{ 1219.00} & 1532.00 & 1360.00 & 1342.00 & 1403.00 & 1479.00 \\ 
					$ES_{0.95}(S)$ & \squared{1590.08} & 1733.70 & 1665.46 & 1641.07 & 1683.14 & 1724.32 \\ 
					$\Psi_{0.001}(S)$ & \squared{555.98} & 587.74 & 566.80 & 563.46 & 570.51 & 580.07 \\
					$Std(S)$ & \squared{473.23} & 521.70 & 488.85 & 485.22 & 494.70 & 508.47 \\ 
					\bottomrule
					\toprule
					& ${\rr_7}$ & ${\rr_8}$ & ${\rr_9}$& ${\rr_{10}}$ & ${\rr_{11}}$ & ${\rr_{12}}$  \\ 
					\midrule
					$VaR_{0.95}(S)$ & 1561.00 & 1493.00 & 1567.00 & 1618.00 & \underline{1643.00} & 1535.00 \\ 
					$ES_{0.95}(S)$ & 1802.17 & 1771.05 & 1824.07 & 1888.55 & \underline{1906.84} & 1818.89 \\ 
					$\Psi_{0.001}(S)$ & 602.12 & 590.22 & 603.90 & 622.97 & \underline{629.61} & 601.55 \\ 
					$Std(S)$ & 535.91 & 518.49 & 536.10 & 558.13 & \underline{566.39} & 531.66 \\ 
					\bottomrule
				\end{tabular}
				\caption{\emph{Values of risk measures of the sum of the components of vectors with joint cdf defined by the GFGM copulas corresponding to the extremal probability mass functions ($\rr_i$, $i \in \{1,\ldots, 12\}$) of the class $\mathcal{B}_3(\tfrac{1}{2}, \tfrac{1}{3}, \tfrac{2}{3})$. The minimum values are squared and the maximum ones are underlined.}}
			\end{table}
			We conclude this example by finding the contribution of risk $X_j$, $j \in \{1,2,3\}$, to the standard deviation of the sum $S=X_1+X_2+X_3$, to the $\VaR_{\alpha}$ and to the $\ES_{\alpha}$, for all the extremal dependence structures.
			We recall the definitions of expected allocation and of expected contribution of the risk $X_j$ to a total outcome $S=y$, for $y \in \{0,1, \ldots, 3000\}$. The  expected allocation of each risk $X_j$ in Definition 1.1 of \cite{blier2022generating} is given by
			\begin{equation*}
				E[X_j\boldsymbol{1}\{S=y\}],
				\quad
				y \in \mathbb{N}_+,
			\end{equation*}
			where $\boldsymbol{1}$ is the indicator function, 
			such that $\boldsymbol{1}\{A\} = 1 $, if $A$ is true, 
			and  $\boldsymbol{1}\{A\} = 0 $, otherwise.
			The expected contribution of each risk $X_j$, $j \in \{1,2,3\}$, is provided in Equation (6) of \cite{blier2022generating}, and is defined by
			\begin{equation*}
				E[X_j|S=y] = \frac{E[X_j\boldsymbol{1}\{S=y\}]}{\Pr(S=y)},
				\quad
				y \in \mathbb{N}_+,
			\end{equation*}
			assuming that $\Pr(S = y) > 0$.
			The expected contribution of $X_j$ to the $\VaR_{\alpha}$ is given by $E[X_j|S=\VaR_{\alpha}(S)]$.
			We now recall the expression for the contribution to the $\ES_{\alpha}$ based on the Euler-based allocation rule provided in \cite{tasche1999risk}:
			\begin{equation*}
				CES_{\alpha}(X_j, S)=\frac{E[X_j] - E[X_j \boldsymbol{1}\{S\leq \VaR_{\alpha}(S)\}] + \beta_S E[X_j \boldsymbol{1}\{S = \VaR_{\alpha}(S)\}]}{1-\alpha},
			\end{equation*}
			where
			\begin{equation*}
				\beta_S 
				= 
				\begin{cases}
					\frac{\Pr(S\leq  \VaR_{\alpha}(S)) - \alpha}{\Pr(S = \VaR_{\alpha}(S))}, & \text{if } \Pr(S = \VaR_{\alpha}(S)) > 0, \\
					0, & \text{if } \Pr(S = \VaR_{\alpha}(S)) = 0,  
				\end{cases}
			\end{equation*}
			and 
			\begin{equation*}
				E[X_j \boldsymbol{1}\{S\leq k\}]
				=
				\sum_{y=1}^k E[X_j \boldsymbol{1}\{S = y\}], 
				\quad 
				k \in \mathbb{N}_+.
			\end{equation*}
			Finally, the contribution of $X_j$ to the standard deviation of $S$ based on Euler's rule is given by
			\begin{equation}
				CStd(X_j,S) 
				=
				\frac{Cov(X_j,S)}{\sqrt{Var(S)}}
				=
				\frac{Var(X_j) + \sum_{j' \neq j} Cov(X_j, X_{j'}) }{\sqrt{Var(S)}},
				\quad
				j = 1,2,3.
			\end{equation}
			Figure \ref{barplot_Contributions} reports the contributions to the $\VaR_{0.95}$, the $\ES_{0.95}$ and the standard deviation of $S$.
			
			\begin{figure}[tb]
				\centering
				\includegraphics[width=0.8\linewidth]{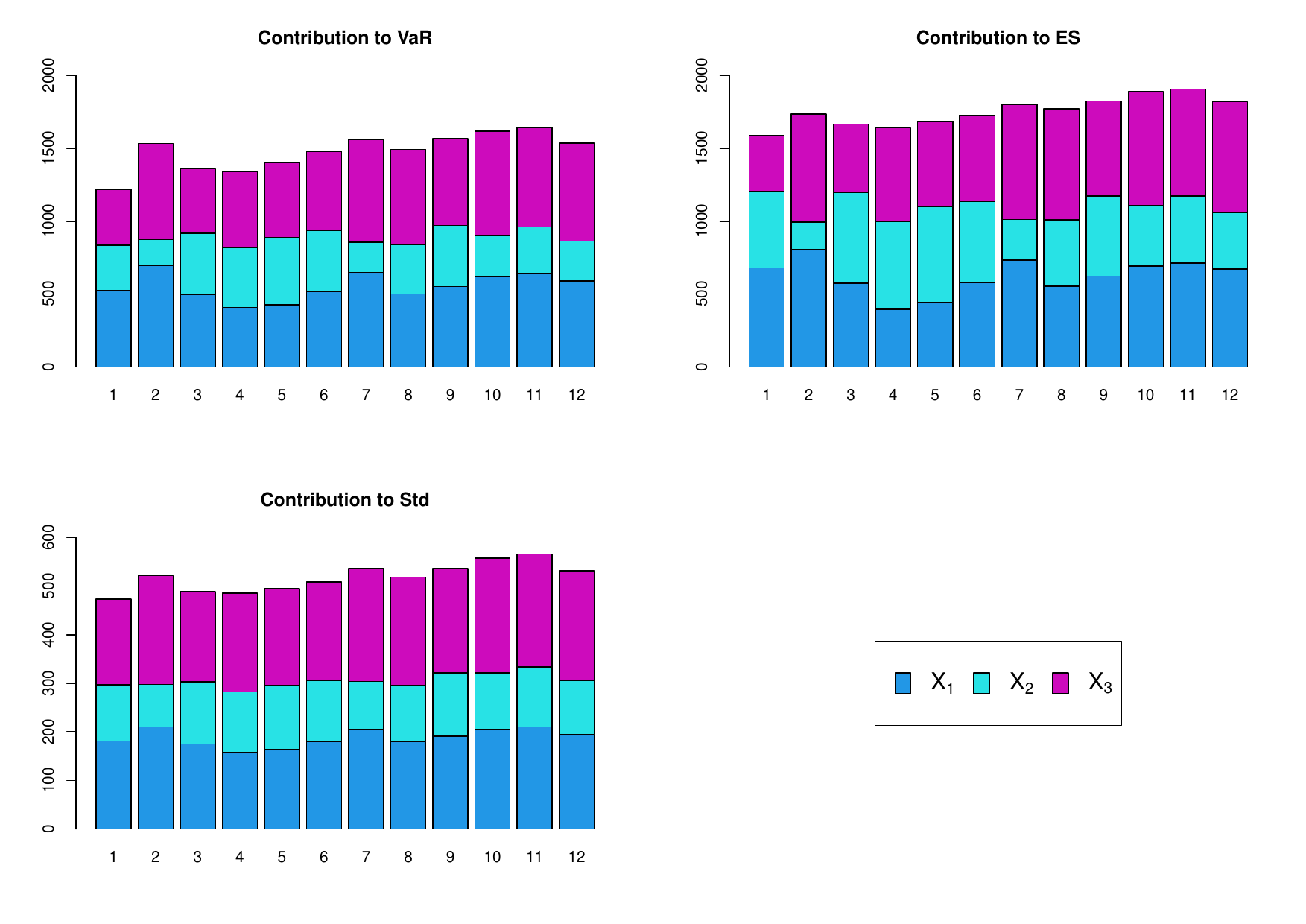}
				\caption{\emph{Contributions of $X_j$, $j \in \{1,2,3\}$ to the $\VaR_{0.95}(S)$ (top-left), to the $\ES_{0.95}(S)$ (top-right), and to the standard deviation of $S$ (bottom-left), based on Euler's rule.}}
				\label{barplot_Contributions}
			\end{figure}

		\end{example}
		
		The geometrical structure of GFGM copulas inherited from the geometrical structure of multivariate Bernoulli distributions has proven to be a powerful tool for studying the properties of random vectors with GFGM dependence. 
		
		The last Example \ref{ex:final} finds the bounds by enumeration of their values in the extremal points, which becomes computationally challenging in high dimensions.  Under the assumption of identically distributed risks with GFGM$(p)$ dependence structure, we show the effectiveness of our theoretical results in studying the risk of high dimensional --- $d=100$ --- portfolios.   
		The extension of these theoretical results to the whole GFGM copulas relies on extending corresponding results in the class of multivariate Bernoulli distributions, and this is part of our ongoing research. Another, more applicative, part is to use this novel geometrical representation to investigate the dependence structure of the class and of their extremal points, which are good candidates for representing extremal dependence also in high dimension.
		
		\newpage
		
		\newpage
		\section{Acknowledgements}
		
		This work was partially supported by the Natural Sciences and Engineering Research Council of Canada (Cossette: 04273; Marceau: 05605). This work was also partially supported by the Italian Ministry of Education, University and Research (MIUR), PRIN 2022-PNRR project P2022XT8C8. H. Cossette and E. Marceau would like to thank \textit{Dipartimento di Scienze Matematiche "G. L. Lagrange" (DISMA), Politecnico di Torino,} for their wonderful stay during which most of the paper was written.

		\bibliographystyle{apalike}
		\bibliography{ref}

	\end{document}